\newcommand{\Erdos}{Erd\H{o}s\xspace}
\newcommand{\Renyi}{R\'enyi\xspace}
\crefname{lemma}{Lemma}{Lemmas}
\crefname{fact}{Fact}{Facts}
\newcommand{\colorconstraints}{\text{Color Constraints}}
\crefname{colorconstraints}{(color constraints)}{Color Constraints}
\crefname{indsetconstraints}{(indset constraints)}{IndSet Constraints}
\crefname{theorem}{Theorem}{Theorems}
\crefname{mtheorem}{Theorem}{Theorems}
\crefname{corollary}{Corollary}{Corollaries}
\crefname{claim}{Claim}{Claims}
\crefname{fact}{Fact}{Fact}
\crefname{example}{Example}{Examples}
\crefname{algorithm}{Algorithm}{Algorithms}
\crefname{problem}{Problem}{Problems}
\crefname{definition}{Definition}{Definitions}
\newtheorem{theorem}{Theorem}[section]
\newtheorem*{theorem*}{Theorem}
\newtheorem*{proposition*}{Proposition}
\newtheorem{lemma}[theorem]{Lemma}
\newtheorem*{lemma*}{Lemma}
\newtheorem*{conjecture*}{Conjecture}
\newtheorem{fact}[theorem]{Fact}
\newtheorem*{fact*}{Fact}
\newtheorem*{hypothesis*}{Hypothesis}
\theoremstyle{definition}
\newtheorem{definition}[theorem]{Definition}
\newtheorem*{definition*}{Definition}
\theoremstyle{remark}
\newtheorem*{claim*}{Claim}
\newtheorem*{remark*}{Remark}
\newtheorem*{observation*}{Observation}
\let\mathbb\varmathbb
\newcommand{\R}{{\mathbb R}}
\newcommand{\N}{{\mathbb N}}
\newcommand{\norm}[1]{\lVert #1 \rVert}
\let\epsilon=\varepsilon
\newcommand{\E}{{\mathbb E}}
\newcommand{\1}{\mathbf{1}}
\newcommand{\C}{\mathbb C}
\newcommand{\cB}{\mathcal B}
\newcommand{\ceil}[1]{\lceil #1 \rceil}
\newcommand{\mper}{\,.}
\newcommand{\mcom}{\,,}
\newcommand{\Paren}[1]{\left(#1\right)}
\newcommand{\Norm}[1]{\left\lVert#1\right\rVert}
\newcommand{\iprod}[1]{\langle#1\rangle}
\newcommand{\inprod}[1]{\langle#1\rangle}
\newcommand{\cD}{\mathcal D}
\newcommand{\cut}{\mathsf{cut}}
\newcommand{\SRPC}{\mathsf{SRPC}}
\newcommand{\cin}{\mathrm{in}}
\newcommand{\cout}{\mathrm{out}}
\newcommand{\Oh}{O}
\begin{document}

\newcommand{\FormatAuthor}[3]{
\begin{tabular}{c}
#1 \\ {\small\texttt{#2}} \\ {\small #3}
\end{tabular}
}

\title{Semirandom Planted Clique\\and the Restricted Isometry Property}

\author{
\begin{tabular}[h!]{cc}
  \FormatAuthor{Jaros\l aw B\l asiok}{jaroslaw.blasiok@inf.ethz.ch}{ETH Zürich}&
  \FormatAuthor{Rares-Darius Buhai}{rares.buhai@inf.ethz.ch}{ETH Zürich}\\\\
  \FormatAuthor{Pravesh K.\ Kothari}{kothari@cs.princeton.edu}{IAS \& Princeton University}&
  \FormatAuthor{David Steurer}{dsteurer@inf.ethz.ch}{ETH Zürich}
\end{tabular}
} %
\date{\today}

\maketitle
\thispagestyle{empty}

\begin{abstract}
  We give a simple, greedy $O(n^{\omega+0.5})=O(n^{2.872})$-time algorithm to list-decode planted cliques in a semirandom model introduced in~\cite{DBLP:conf/stoc/CharikarSV17} (following~\cite{FK01}) that succeeds whenever the size of the planted clique is $k\geq O(\sqrt{n} \log^2 n)$. In the model, the edges touching the vertices in the planted $k$-clique are drawn independently with probability $p=1/2$ while the edges not touching the planted clique are chosen by an adversary in response to the random choices. Our result shows that the computational threshold in the semirandom setting is within a $O(\log^2 n)$ factor of the information-theoretic one~\cite{steinhardt2017does} thus resolving an open question of Steinhardt. This threshold also essentially matches the conjectured computational threshold for the well-studied special case of fully random planted clique.

All previous algorithms~\cite{DBLP:conf/stoc/CharikarSV17,MMT20,MR4617517-Buhai23} in this model are based on rather sophisticated rounding algorithms for entropy-constrained semidefinite programming relaxations and their sum-of-squares strengthenings and the best known guarantee is a $n^{O(1/\epsilon)}$-time algorithm to list-decode planted cliques of size $k \geq \tilde{O}(n^{1/2+\epsilon})$. In particular, the guarantee trivializes to quasi-polynomial time if the planted clique is of size $O(\sqrt{n} \operatorname{polylog} n)$. Our algorithm achieves an almost optimal guarantee with a surprisingly simple greedy algorithm.

The prior state-of-the-art algorithmic result above is based on a reduction to certifying bounds on the size of unbalanced bicliques in random graphs --- closely related to certifying the restricted isometry property (RIP) of certain random matrices and known to be hard in the low-degree polynomial model. Our key idea is a new approach that relies on the truth of --- but not efficient certificates for --- RIP of a new class of matrices built from the input graphs.

\end{abstract}

\clearpage

\pagestyle{plain}
\setcounter{page}{1}

\section{Introduction}%
\label{sec:introduction}

Finding planted cliques~\cite{DBLP:journals/rsa/Jerrum92,DBLP:journals/dam/Kucera95}  in \Erdos-\Renyi random graphs $G(n,1/2)$ is a well-studied average-case variant of the notoriously hard clique problem. The added clique is uniquely identifiable with high probability (and recoverable via a brute-force quasi-polynomial time algorithm) whenever it has size $k \gg 2 \log n$. However, the best known efficient algorithms require $k \gg \sqrt{n}$. The algorithms themselves are rather simple. The simple greedy degree heuristic of reporting the $k$ largest-degree vertices works if $k \gg \sqrt{n \log n}$ and the $\sqrt{\log n}$ factor can be shaved off by a natural spectral algorithm~\cite{MR1662795-Alon98}. Lower bounds in restricted models~\cite{BHK+16,MR3664576-Feldman17} provide evidence that efficient algorithms cannot beat the $\sqrt{n}$ threshold and this \emph{information-computation} gap is the fountainhead of several hardness results in average-case optimization~\cite{brennan2018reducibility, brennan2020reducibility}.

The simple algorithms above for finding planted cliques are rather brittle --- modifying only $O(k^2)$ edges that do not even touch the planted clique is enough to completely break their guarantees. To disallow such brittle heuristics that tend to ``overfit'' to the specific idealized random model, Feige and Kilian~\cite{FK01} introduced \emph{semirandom} models~\cite{bwca-semirandom} for the planted clique problem (following the seminal work of Blum and Spencer~\cite{BLUM1995204}). The input graph in their model is chosen by a combination of benign random and adaptive adversarial choices that preclude brittle heuristics while still hopefully steering clear of the worst-case hard instances. Such a semirandom model is the main focus of this work:

\begin{definition}[Semirandom planted clique, $\SRPC(n,k,p)$~\cite{DBLP:conf/stoc/CharikarSV17}]
\label{def:src}
To an empty graph $G([n],E)$:
\begin{enumerate} 
\item \emph{Plant a clique}: Plant a clique on an arbitrary subset of vertices $S^* \subseteq [n]$ with $|S^*|=k$. 
\item \emph{Include cut edges at random}: Add each edge in $\cut(S^*)$ independently with probability $p$.
\item \emph{Choose rest of the edges adversarially}: Adaptively choose any induced graph on $[n] \setminus S^*$.
\end{enumerate}
\end{definition} 
In the fully random planted clique setting, the edges in the third step are also chosen at random. Note further that the adversarial choice is \emph{adaptive}, i.e., made in response to the random choice in the second step. Approximating the maximum clique in such a graph is clearly as hard as the worst-case variant~\cite{MR1687331-Haastad99,MR2403018-Zuckerman07}, since in the third step we can simply plant a worst-case hard instance on $[n] \setminus S^*$. The ``right'' goal, instead, is to find a $k$-clique in the input graph. In fact, all known algorithms focus on the formally stronger goal of finding a small ($\sim n/k$ in the strongest possible results) \emph{list} of $k$-cliques guaranteed to contain $S^*$. This \emph{list-decoding} goal relaxes the usual \emph{unique recovery} goal that is clearly impossible in the semirandom model.

A new course of progress on the problem was begun in~\cite{DBLP:conf/stoc/CharikarSV17} who gave a polynomial-time algorithm based on rounding an appropriate semidefinite programming relaxation that works whenever $k \geq O(n^{2/3} \log^{1/3} n)$. McKenzie, Mehta and Trevisan~\cite{MMT20} improved this threshold by logarithmic factors to $k \geq O(n^{2/3})$. Recently~\cite{MR4617517-Buhai23} made further progress by finding an $n^{O(1/\epsilon)}$-time algorithm that succeeds whenever the planted clique has size $k \geq n^{1/2+\epsilon}$ for any $\epsilon>0$. The results in the latter two works also tolerate a monotone adversary that can delete an arbitrary subset of edges in $\cut(S^*)$. 
On the flip side, Steinhardt~\cite{steinhardt2017does} proved that if $k = o(\sqrt{n})$ it is \emph{information-theoretically} impossible to recover a list of size $O(n/k)$ that is guaranteed to contain the planted clique with high probability.\footnote{Observe that a disjoint union of $k$-cliques with all the remaining edges chosen independently with probability $1/2$ is a valid input instance in the model, with $n/k$ indistinguishable $k$-cliques. Thus, the minimum list size possible is $\geq n/k$.}\footnote{The lower bound instance formally adds a clique that is slightly larger than $k$ --- hence, the input $k$ is a \emph{lower bound} on the size of the planted clique. All algorithmic results including ours easily generalize to only need a lower bound on size of the planted clique.}
That is, the information-theoretic threshold for the natural estimation task is \emph{shifted} from $\lceil 2 \log_2 n \rceil$ to $\sim \sqrt{n}$ in the semirandom model. If this threshold were also achievable by a polynomial-time algorithm, Steinhardt noted, then the widely conjectured information-computation gap in the fully random planted clique problem disappears for the natural \emph{robust} version of the problem studied in this (and his) work. With this motivation, Steinhardt~\cite{steinhardt2017does} (and for a closely related variant, Feige~\cite{bwca-semirandom}) explicitly poses the natural question of whether semirandom planted clique admits an efficient algorithm at $k \sim \tilde{O}(\sqrt{n})$.

When $k = \sqrt{n} \operatorname{polylog} n$, the state-of-the-art algorithm above devolves into a near-brute-force running time of $n^{O(\log n / \log \log n)}$. Further, even for $k \geq n^{1/2+\epsilon}$, all the known results above~\cite{DBLP:conf/stoc/CharikarSV17,MMT20,MR4617517-Buhai23} are based on rather sophisticated rounding algorithms for semidefinite programming relaxations (and their sum-of-squares strengthenings) with \emph{entropy maximization} constraints. This raises the natural question of whether simple heuristics could succeed at $k = \tilde{O}(\sqrt{n})$ or at least match the best known results. 

\paragraph{This work.} In this work, we give a surprisingly simple greedy algorithm that finds a list of size $(1+o(1)) n/k$ containing the planted clique with high probability, whenever $k \geq O(\sqrt{n} \log^2 n)$, essentially resolving Steinhardt's question. Our algorithm admits an optimized implementation via black-box usage of fast matrix multiplication and runs in time $O(n^{\omega+0.5}) = O(n^{2.872})$.

\begin{theorem}[Main result] \label{thm:main-intro}
There is an $O(n^{\omega+0.5})$-time algorithm, where $\omega \leq 2.372$ is the matrix multiplication exponent, that takes as input a graph $G \sim \SRPC(n,k,1/2)$ and, if $k \geq O(\sqrt{n} \log^2 n)$, outputs a list of $k$-cliques of size $(1+o(1)) n/k$ such that, with probability at least $0.99$ over the draw of $G$ and the random choices of the algorithm, $S^*$ is contained in the list. 
\end{theorem}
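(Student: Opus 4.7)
I would design an anchored subroutine that, given a single candidate anchor $v \in [n]$, outputs a $k$-clique which equals $S^*$ whenever $v \in S^*$. Sampling $O((n/k) \log n) = \tilde O(\sqrt n / \log n)$ random anchors then ensures, by a union bound over the random choice of anchor, that at least one call receives $v \in S^*$ and contributes $S^*$ to the output list. Deduplication brings the final list size to $(1+o(1)) n/k$, matching the information-theoretic lower bound. The $\tilde O(\sqrt n)$ anchored calls share the precomputation of common-neighborhood counts via fast matrix multiplication, giving the claimed $O(n^{\omega+0.5})$ total runtime.

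\paragraph{The subroutine and the easy recovery step.}
The subroutine greedily grows a seed set $T \subseteq S^*$ of size $\tau = \Theta(\log n)$, starting from $T_0 = \{v\}$. Once $T \subseteq S^*$ with $|T| = \tau$ is obtained, recovery of $S^*$ is immediate: since $S^*$ is a clique, $S^* \setminus T \subseteq N(T)$; conversely, for each $w \notin S^*$, the edges from $w$ to $T$ are independent random cut edges, so $\Pr[w \in N(T)] = 2^{-\tau}$. Chernoff plus a union bound over $w$ yields $|N(T) \setminus S^*| \leq O(n \cdot 2^{-\tau} + \sqrt{n \cdot 2^{-\tau} \log n}) = o(k)$ once $\tau \geq C \log n$ for a large enough constant $C$. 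Hence $T \cup N(T)$ contains $S^*$ plus $o(k)$ extra vertices, which a straightforward clique-consistency sweep prunes away to recover $S^*$ exactly.

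\paragraph{The hard part: greedy seed growth.}
The crux is extending $T_t \subseteq S^*$ to $T_{t+1} = T_t \cup \{v_{t+1}\}$ while keeping $v_{t+1} \in S^*$. In the ``easy regime'' $t \gtrsim \log(n/k)$, the natural score $|N(T_t \cup \{u\})|$ already favors $u \in S^*$: the clique contribution is $\approx k-t$ while the mixed random/adversarial noise is $\lesssim n \cdot 2^{-(t+1)} \ll k$. The hard regime is the initial $\Theta(\log(n/k)) \approx \tfrac{1}{2}\log n$ steps, where $|N(T_t) \setminus S^*| \approx n \cdot 2^{-t} \gg k$ and the adversary could have engineered the non-clique subgraph so that many outside vertices mimic $S^*$-vertices under naive scoring. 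The score we use in this regime tests $u$'s connection pattern against a random probe inside $N(T_t)$: for $u \in S^*$, the pattern is tightly constrained by the random cut-edge law (deterministic agreement on $T_t$ plus a random $\mathrm{Bin}(\cdot, 1/2)$ contribution from outside), while for $u \notin S^*$ the adversary's ability to align with the probe is bounded via a random-matrix argument.

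\paragraph{Role of RIP and the main obstacle.}
The bound on the adversary's power is the key step and rests on the restricted isometry property of the centered cut-edge biadjacency matrix $B \in \{-1/2,+1/2\}^{(n-k)\times k}$, whose i.i.d.\ entries yield RIP of constant distortion on vectors with polynomially large support, with high probability. Crucially, we use only the \emph{truth} of RIP as a deterministic property of the random environment; no efficient certification is needed, which is what frees us from the entropy-maximizing SDPs of prior work. The main technical challenge is to push RIP through a uniform argument: it must hold simultaneously for every subset $T \subseteq S^*$ of size $\leq \tau$ the greedy could possibly visit and for every adaptive adversarial choice of the non-clique subgraph. Combining RIP with a union bound over the $\binom{k}{\le\tau} = n^{O(\log n)}$ possible seeds (absorbed by the RIP margin) and with Chernoff bounds for the random cut edges then yields the inductive step: at each level of the greedy, only $o(|N(T_t)|)$ non-$S^*$ vertices could defeat the score, so the greedy makes a correct choice with probability $1-o(1)$. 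Compounding across the $\tau = \Theta(\log n)$ levels and boosting by the $\tilde O(n/k)$ choices of anchor yields the overall success probability claimed in the theorem.
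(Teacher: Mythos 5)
Your high-level framing captures the paper's key conceptual insight — use the \emph{truth} of RIP over the random cut edges, with no need for efficient certificates, to bound the adversary's ability to create impostor vertices — but the concrete algorithm you propose is genuinely different from the paper's, and it has a gap exactly where the paper's real work lies.

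The paper's algorithm is \emph{non-adaptive}: for a random triple $\alpha \subset [n]$, $|\alpha|=3$, it forms the Hadamard product $G_\alpha = G_{\alpha_1} \odot G_{\alpha_2} \odot G_{\alpha_3}$ of rows of the $\pm 1$-adjacency matrix and takes $S_\alpha = \{j : \iprod{G_\alpha, G_j} \geq k/2\}$. The crux is \Cref{lem:few-bad}: for every $j \notin S^*$, at most $O(n^2/k^2)$ of the $\Theta(k^3)$ triples $\alpha \subset S^*$ can produce $\iprod{G_\alpha, G_j} \geq k/2$. Proving this requires RIP not of the basic $(n-k)\times k$ cut matrix $B$, but of the \emph{thrice-tensored} matrix $H \in \R^{(n-k)\times \Theta(k^3)}$ whose columns are the Hadamard products $G_\alpha^{\cout}$ indexed by triples $\alpha \subset S^*$. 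This matrix has highly correlated columns, and establishing its RIP (\Cref{lem:tensoring-gives-RIP}, via the bounded-orthonormal-system theorem of Rudelson--Vershynin) is the main technical contribution. An averaging argument over the $\Theta(k^3)$ triples then shows a random $\alpha \subset S^*$ gives $|S_\alpha \setminus S^*| = O(n/k) = o(k)$, after which the degree-pruning step finishes.

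Your proposal instead grows a seed $T_t \subseteq S^*$ adaptively over $\Theta(\log n)$ rounds and invokes RIP only of the \emph{basic} matrix $B \in \{-1/2,+1/2\}^{(n-k)\times k}$. This is precisely the matrix the paper analyzes in its warm-up (\Cref{lem:overview-main-analysis}), and that analysis tops out at $k \gtrsim n^{3/4}$: RIP/spectral bounds on $B$ alone bound the total number of bad $(i,j)$ pairs by $O(n^3/k^2)$, and averaging over only $k$ columns leaves $O(n^3/k^3)$ false positives per column, which is $o(k)$ only when $k \gg n^{3/4}$. Nothing in your sketch explains how the adaptive seed growth beats this barrier. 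The step that should do the heavy lifting — ``the score we use \ldots tests $u$'s connection pattern against a random probe inside $N(T_t)$'' together with a ``random-matrix argument'' bounding adversarial alignment — is left entirely unspecified, and it is exactly at this point that the paper introduces the tensored matrix and its RIP. On top of this, the adaptivity of $T_t$ (it depends on cut randomness \emph{and} on the adversary's adaptive choice of non-cut edges) creates measurability/dependency issues that the paper's non-adaptive triple-scoring avoids entirely, and the required union bound over $\binom{k}{\le\tau} = n^{O(\log n)}$ possible seeds is a substantially heavier burden than the paper's averaging over $\Theta(k^3)$ triples. As written, the proposal identifies the right conceptual lever (truth, not certificates, of RIP) but misses the paper's actual algorithmic and technical core.
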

Observe that our result comes within a $O(\log^2 n)$ factor of the information-theoretic lower bound~\cite{steinhardt2017does}.\footnote{Steinhardt suggests that the lower bound on the threshold can possibly be improved to $k \geq O(\sqrt{n \log n})$ --- the threshold at which the natural quasi-polynomial time brute-force algorithm works. This would make the threshold of our algorithm off by a $\log^{1.5} n$ factor.}
\cref{thm:pgeneral-main} presents an appropriate generalization to $\SRPC(n,k,p)$ for arbitrary $p$ that improves upon the bound on $k$ in previous results when $p \leq 1-n^{-0.01}$.

\paragraph{A new connection to RIP.} Our algorithm relies on a \emph{new} connection to the \emph{restricted isometry property}\footnote{An $m \times n$ matrix $H$ is said to satisfy $(r,\delta)$-RIP if for every $r$-sparse vector $v$, $\Norm{Hv}_2 \in (1 \pm \delta) \Norm{v}_2$.} of a certain $n^{O(1)}$-size matrix built from the graph that relies only on the randomness of the edges in the cut defined by the planted clique. We show how to relate the success of a simple greedy algorithm to the RIP property of this matrix that, crucially, \emph{does not need} any efficient certificates for the RIP itself. This is crucial because we need the RIP property in the near optimal (and conjecturally, hard to efficiently certify) parameter regimes. 

To put this discussion in proper context, recall that the fully-random planted clique problem is closely related to efficiently certifying sparse quadratic forms (i.e., certifying upper bounds on $v^{\top} Av$ for the adjacency matrix $A$ of the input graph and sparse $v$) or sparse PCA but these reductions also produce instances in the conjectured hard regimes~\cite{berthet2013computational,HKP+17,DBLP:journals/tit/KoiranZ14}. Indeed, such results are usually interpreted as hardness results for certifying the RIP and bounds on sparse quadratic forms (assuming the planted clique conjecture). To use such reductions algorithmically, we need efficient certificates of RIP in conjectured hard parameter regimes~\cite{DBLP:journals/tit/BandeiraDMS13,MR3245368-Koiran14,MR4345126-Ding21}\footnote{For example, while an $m \times n$ matrix with independent $\pm 1 / \sqrt{m}$ entries is known to be $(r,0.1)$-RIP for $r \sim \Omega(m/\log (n/m))$, efficiently certifying such a fact is conjectured to be hard~\cite{DBLP:journals/tit/BandeiraDMS13,MR3245368-Koiran14,MR4345126-Ding21}.}.

As an important special case,  we note that the problem of certifying bounds on the biclique numbers of random graphs that forms the crux of the approach for semirandom planted clique in~\cite{MR4617517-Buhai23} also reduces to certifying RIP in the conjectured hard regime.

Our reduction, in contrast, shows how to solve the semirandom planted clique problem as long as some matrices (rather different from the ones arising in the above reductions) built from the input graph satisfy RIP. Crucially, we \emph{do not need} efficient certificates of RIP --- this important difference allows our approach to obtain a significantly better guarantee in this work. 

\paragraph{Monotone deletions.} As we noted above, while the original work of Charikar, Steinhardt and Valiant~\cite{DBLP:conf/stoc/CharikarSV17}, the open question of Steinhardt~\cite{steinhardt2017does}, and the lower bound~\cite{steinhardt2017does} are all phrased about the model studied in this paper, the results in~\cite{MMT20,MR4617517-Buhai23} tolerate, in addition, an adversary that can delete an arbitrary subset of edges in $\cut(S^*)$~\cite{FK01}. Our approach needs certificates of RIP (in the conjectured hard regime) to tolerate such a monotone adversary. A similar bottleneck  prevents algorithms based on certifying biclique numbers to reach the $k=\sqrt{n} \operatorname{polylog} n$ threshold. This is made formal via low-degree polynomial lower bounds in~\cite{MR4617517-Buhai23}.

Our result raises an intriguing possibility of a shift in the \emph{computational threshold} for a natural estimation problem due to a monotone adversary. This is in contrast to the intriguing work of Moitra, Perry and Wein~\cite{DBLP:conf/stoc/MoitraPW16} who showed that the \emph{information-theoretic threshold} for community detection in the stochastic block model \emph{does shift} under a monotone adversary.

\section{Overview of our algorithm}
In this section we give a brief overview of our approach for semirandom planted clique. 
As in the prior work~\cite{MR4617517-Buhai23} (see~\Cref{lem:pruning} in the Appendix), when $k \geq O(\sqrt{n \log n})$ we can prune any polynomial-size list containing $S^*$ down to size $(1 + o(1))n/k$ without removing $S^*$. This can be achieved by removing from the list all cliques that have intersection larger than $\Omega(\log n)$ with any other clique in the list.  %
Therefore we will focus on how to recover a list of $k$-cliques containing $S^*$ of polynomial size.  

\paragraph{Recalling the approach of~\cite{MR4617517-Buhai23}.} Before describing our approach, let us recall the key idea in the prior state-of-the-art~\cite{MR4617517-Buhai23} that gave a polynomial-time algorithm that works whenever $k \geq n^{1/2+\epsilon}$ for an arbitrary constant $\epsilon >0$.
Their main idea is a reduction (within the sum-of-squares framework) to efficiently ``certifying'' bounds on the biclique numbers of bipartite random graphs. 
In particular, in order for their algorithm to succeed at $k = \tilde{O}(\sqrt{n})$, they need a constant-degree sum-of-squares certificate (roughly, polynomial-time verifiable via an SDP  relaxation) that a $k \times n$ bipartite random graph (each edge included independently with probability $1/2$) has no $O(\log^{C} n) \times k$ bipartite clique in it for $C = O(1)$ (with high probability). Crucially, they need certificates even for the model \emph{without monotone deletions}. This certification task is naturally related (and, in fact, reduces) to \emph{certifying} the restricted isometry property of matrices built from the bipartite random graph in the conjectured-to-be-hard regime~\cite{DBLP:journals/tit/BandeiraDMS13,MR3245368-Koiran14}. Unfortunately, they also show that certifying such biclique numbers likely suffers from information-computation gaps by establishing lower bounds in the low-degree polynomial model, and thus appear to run into an inherent barrier in obtaining algorithms that work when $k = \sqrt{n} \log^{O(1)}(n)$.

Our main idea is a new approach that circumvents the need for certificates on the biclique number of bipartite random graphs altogether and succeeds with just the truth of (but without efficient certificates for) the restricted isometry property. 
Our approach leads to an algorithm that does not need semidefinite programming and is based on a simple, efficient, greedy procedure, the correctness of which relies on the RIP of a natural matrix built from the input graph.
In the following, we explain how to build up to this procedure by starting with a naive greedy procedure. 

\subsection{The naive greedy algorithm} 
We let $G$ denote the $\pm 1$-adjacency matrix of the graph $G$, i.e., $G(i,j)=1$ iff $\{i,j\}$ is an edge in $G$. 
We also let $G_i$ be the $i$-th row of $G$, $G_i^{\cin} \in \{\pm 1\}^k$ be the projection of $G_i$ to coordinates in $S^*$, and $G_i^{\cout} \in \{\pm 1\}^{n-k}$ the projection of $G_i$ to coordinates in $[n] \setminus S^*$. 

Let us make the simple observation that $G_i$ and $G_j$ are non-trivially correlated if $i$ and $j$ both are in $S^*$. Indeed, we have $\iprod{G_i,G_j} = \iprod{G_i^{\cin}, G_j^{\cin}} + \iprod{G_i^{\cout},G_j^{\cout}}$. Since $i,j \in S^*$, the first term is clearly $k$.
Further, since every edge in $\cut(S^*)$ is chosen independently to be in $G$ with probability $1/2$, $G_i^{\cout}$ and $G_j^{\cout}$ are uniformly random and independent elements of $\{\pm 1\}^{n-k}$ and thus, with $1-1/n^{O(1)}$ probability over the draw of edges in $\cut(S^*)$, $|\iprod{G_i^{\cout}, G_j^{\cout}}| \leq O(\sqrt{n \log n})$.
Thus, $\iprod{G_i,G_j} \geq k \pm O(\sqrt{n \log n}) \geq k/2$ if $k \geq O(\sqrt{n\log n})$. 

This observation naturally suggests the following simple algorithm:

\vspace{4mm}
\noindent \textbf{Naive greedy procedure:} For a uniformly random $i \in [n]$, add $S_i = \{j: \iprod{G_i,G_j} \geq k/2\}$ to the list.

\vspace{4mm}
We choose an $i \in S^*$ with probability $k/n$ and, thus, repeating the above procedure $O(n/k)$ times ensures that with probability at least $0.99$ we pick some $i \in S^*$.
From the above correlation computation, $S^* \subseteq S_i$ for any $i \in S^*$.
Could such an $S_i$ contain a $j \not \in S^*$?

For an $i \in S^*$ and $j \not \in S^*$, we can write $\iprod{G_i,G_j} = \iprod{G_i^{\cin},G_j^{\cin}} + \iprod{G_i^{\cout},G_j^{\cout}}$.
The first term is at most $O(\sqrt{k \log n})$ with high probability.
However, we have little control on the second term since $G_j^{\cout}$ is chosen by an adversary (in response to the random choice of edges in $\cut(S^*)$).
In fact, it turns out that the adversary can arrange (multiple!) $j \not \in S^*$ such that $\iprod{G_i,G_j} \gg k$ \emph{simultaneously} for $\geq O(n^2/k^2)$ different $i \in S^*$ --- see~\Cref{lem:adv-cor}.
Notice that $n^2/k^2 \gg k$ if $k = \tilde{O}(\sqrt{n})$, in which case the adversary can ensure that \emph{every} $i\in S^*$ fails to produce $S_i = S^*$ in the above greedy algorithm.

\paragraph{Analyzing naive greedy algorithm for $k \geq \tilde{O}(n^{3/4})$.}
While the naive greedy procedure fails for $k = \tilde{O}(\sqrt{n})$, 
we now argue that it does succeed for $k \geq \tilde{O}(n^{3/4})$. This will form the starting point of our new approach. 

The following simple lemma uses standard bounds on random matrices to control the number of spurious $j \not \in S^*$ that can be in $S_i$ for $i \in S^*$. 
 
\begin{lemma} \label{lem:overview-main-analysis}
Let $v_1, v_2, \ldots, v_k \in \{ \pm 1\}^n$ be uniformly random and independent vectors for $k \geq O(\sqrt{n \log n})$.
Then with probability at least $1-1/n$ over the draw of vectors $v_i$, for every $u \in \{ \pm 1 \}^n$ there are at most $O(n^2/k^2)$ vectors $v_i$ such that $\iprod{u,v_i} \geq k/3$. 
\end{lemma}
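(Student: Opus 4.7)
The plan is a standard Chernoff-plus-union-bound argument; since we need a statement that holds \emph{uniformly} for every $u \in \{\pm 1\}^n$, the point is to balance a strong per-$u$ tail bound against the $2^n$ union-bound cost. Fix $u \in \{\pm 1\}^n$. For each $i \in [k]$, the inner product $\iprod{u,v_i} = \sum_{j=1}^n u_j (v_i)_j$ is a sum of $n$ independent Rademacher variables, so Hoeffding's inequality gives
\[
\Pr\!\Brac{\iprod{u,v_i} \geq k/3} \;\leq\; p \;\defeq\; \exp\!\Paren{-k^2/(18 n)}\mper
\]
Since $v_1, \dots, v_k$ are independent, the indicator events $\{\iprod{u,v_i} \geq k/3\}_{i\in[k]}$ are independent for fixed $u$, so $X_u \defeq \abs{\set{i : \iprod{u,v_i} \geq k/3}}$ is stochastically dominated by $\Bin(k,p)$, and the standard binomial tail bound gives $\Pr[X_u \geq T] \leq \binom{k}{T} p^T \leq (ekp/T)^T$ for any integer $T \geq ekp$.

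Next, I take a union bound over the $2^n$ choices of $u \in \{\pm 1\}^n$:
\[
\Pr\!\Brac{\exists u \in \{\pm 1\}^n : X_u \geq T} \;\leq\; 2^n \cdot \Paren{\tfrac{ekp}{T}}^T\mper
\]
Taking logarithms, this bound is at most $1/n$ whenever $T \cdot \log(T/(ekp)) \geq (n+1)\log 2$, and a direct expansion gives $\log(T/(ekp)) = k^2/(18n) + \log(T/(ek))$.

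Finally, set $T \defeq \ceil{C n^2/k^2}$ for a sufficiently large absolute constant $C$, and invoke the hypothesis $k \geq C_0 \sqrt{n \log n}$ with $C_0$ a large enough absolute constant. Since $T,k \leq n$, the residual term satisfies $\abs{\log(T/(ek))} = O(\log n)$, while the dominant term obeys $k^2/(18n) \geq (C_0^2/18) \log n$. Choosing $C_0$ large enough to absorb the residual yields $\log(T/(ekp)) \geq k^2/(20 n)$, hence
\[
T \cdot \log(T/(ekp)) \;\geq\; \tfrac{C n^2}{k^2} \cdot \tfrac{k^2}{20 n} \;=\; \tfrac{C}{20}\,n\mcom
\]
which exceeds $(n+1)\log 2$ once $C$ is chosen large enough. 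The side condition $T \geq ekp$ required by the binomial tail bound holds by a wide margin, since $ekp \leq ek \cdot n^{-\Omega(C_0^2)} \ll n^2/k^2$ for $C_0$ large. The only substantive step is this calibration: the threshold $k \gtrsim \sqrt{n \log n}$ is exactly what makes the per-$u$ Chernoff exponent $k^2/n$ exceed $\log n$ by a constant factor, and the choice $T \asymp n^2/k^2$ is the unique sweet spot at which the aggregate Chernoff exponent $T \cdot k^2/n$ balances the $n\log 2$ cost of the union bound over $\{\pm 1\}^n$ --- so one should not expect to push $T$ substantially below $n^2/k^2$ by this route.
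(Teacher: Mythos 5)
Your proof is correct, but it takes a genuinely different route from the paper's.

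The paper proves this lemma with a two-line spectral argument: let $H$ be the $n\times k$ matrix with columns $v_1,\dots,v_k$, use the standard bound $\Norm{H}_2 \le O(\sqrt n + \sqrt k) = O(\sqrt n)$ for a random $\pm 1$ matrix (holding with probability $1-1/n$), and then for any $u$ and any set $\cB$ of ``bad'' indices compare the two estimates $\iprod{u,H\1_\cB} \le \Norm{u}_2\Norm{H}_2\sqrt{|\cB|} \le O(n\sqrt{|\cB|})$ and $\iprod{u,H\1_\cB} \ge |\cB| k/3$, which forces $|\cB| \le O(n^2/k^2)$. You instead fix $u$, apply Hoeffding to each $\iprod{u,v_i}$ to get a per-column success probability $p = \exp(-\Omega(k^2/n))$, use independence of the columns to bound the count by a $\Bin(k,p)$ tail, and then union-bound over all $2^n$ choices of $u$; the condition $k \ge O(\sqrt{n\log n})$ enters so that the per-column exponent $k^2/n$ dominates $\log n$, making the aggregate exponent $T\cdot k^2/n$ beat $n\log 2$ at $T\asymp n^2/k^2$. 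Both are correct and give the same parameters. The substantive difference is what happens next in the paper: the spectral route is deliberately chosen because it is the seed of the sparse-operator-norm/RIP argument (\cref{clm:not-many-correlated-vectors} together with \cref{lem:tensoring-gives-RIP}) that is needed for the tensored matrices, where the columns $v_\alpha$ share underlying coordinates and are therefore \emph{not} independent. Your argument relies crucially on the events $\{\iprod{u,v_i}\ge k/3\}$ being independent across $i$ for fixed $u$; in the tensored setting this fails, so the stochastic domination by a binomial breaks down and the approach does not carry over. Conversely, your proof is more elementary and self-contained in that it avoids any appeal to random-matrix operator-norm bounds, and it gives useful intuition about why $\sqrt{n\log n}$ and $n^2/k^2$ are exactly the thresholds that appear. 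One minor bookkeeping point: the union bound requires $T\log(T/(ekp)) \ge n\log 2 + \log n$ rather than $(n+1)\log 2$, but since you ultimately show the left side is $\ge (C/20)n$, this is absorbed for $C$ large.
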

\begin{proof}[Proof of \cref{lem:overview-main-analysis}]
Consider the $n \times k$ matrix $H$ with columns $v_1, v_2, \ldots, v_k$.
Then, by standard results in random matrix theory, $\Norm{H}_2 \leq O(\sqrt{n} + \sqrt{k})$ with probability $1-1/n$.
Let $\cB \subseteq [k]$ be such that, for every $i \in \cB$, $\iprod{u,v_i} \geq k/3$. 
Then, by the Cauchy-Schwartz inequality, we have
\[
\iprod{u, H\1_{\cB}} \leq \|u\|_2 \|H \1_{\cB}\|_2 \leq \sqrt{n} \sqrt{|\cB|} \|H\|_2 \leq O(n \sqrt{\cB})\,.
\]
On the other hand, by the choice of set $\cB$, we have
\[
\iprod{u, H\1_{\cB}} = \sum_{i \in \cB} \iprod{u, v_i}  \geq |\cB| k / 3\,.
\]
Combining those two inequalities and rearranging, we get $|\cB| \leq O(n^2/k^2)$.  
\end{proof}
Using this lemma, it is easy to show that if $k \gg n^{3/4}$ then for a uniformly random $i \in S^*$ we have $|S_i \setminus S^*| \leq o(k)$:
We apply \cref{lem:overview-main-analysis} with vectors $G_i^{\cout}$ for $i \in S^*$ --- each a uniformly random and independent element of $\{ \pm 1\}^{n-k}$.
Then, taking $u = G_j^{\cout}$ for any $j \in [n]$, we know that there are at most $O(n^2/k^2)$ different $i \in S^*$ such that $\iprod{G_i^{\cout}, G_j^{\cout}} \geq k/3$.
Thus, there are in total $O(n^3/k^2)$ pairs $(i,j)$ such that $i \in S^*$ and $j \not \in S^*$ with $\iprod{G_i,G_j} \geq k/2 \geq k/3 + O(\sqrt{k \log n})$.
In particular, by averaging, a uniformly random $i \in S^*$ satisfies $\iprod{G_i,G_j} \geq k/2$ for at most $C n^3/k^3$ different $j\not \in S^*$ (for some fixed constant $C>0$) with probability at least $0.99$.
If $k \gg O(n^3/k^3)$ (e.g., if $k \geq \omega(n^{3/4})$), then $|S_i \setminus S^*| \leq o(k)$.
Finally, from any such set $S_i \supseteq S^*$ that has at most $o(|S^*|)$ erroneous elements, one can obtain $S^*$ by a simple pruning procedure --- for instance by keeping only vertices that have at least $k-1$ neighbors in $S_i$ (see, e.g., Claim 5.9 on page 35 of~\cite{MR4617517-Buhai23}). 

\subsection{Tensoring twice: greedy matches the guarantees of~\cite{MR4617517-Buhai23}}

The simple greedy procedure above 1) relies on standard spectral norm bounds on random matrices with independent entries and 2) gets stuck at $k \geq \tilde{O}(n^{3/4})$.
We now show how to obtain an algorithm that works for any $k \geq n^{1/2+ \epsilon}$ in time $n^{\Oh(1/\varepsilon)}$, matching the guarantees of~\cite{MR4617517-Buhai23} via a simple greedy algorithm analogous to the one in previous subsection. 

For $\alpha \subset [n]$ of size $|\alpha|=2$, let $G_{\alpha} \in \{\pm 1\}^n$ be the vector so that $G_{\alpha}(i) = G_{\alpha_1,i} \cdot G_{\alpha_2,i}$.
That is, $G_{\alpha}(i)$ is the product of the $\pm 1$-indicators of the two edges $(\alpha_1, i)$ and $(\alpha_2, i)$.
Note that $G_{\alpha}$ is the Hadamard (and not the tensor) product of $G_{\alpha_1}$ and $G_{\alpha_2}$. 

Similarly to the previous section, it is easy to observe that for every $\alpha, \alpha' \subset S^*$, $\iprod{G_{\alpha}, G_{\alpha'}} \geq k/2$ if $k \geq O(\sqrt{n \log n})$.
This motivates the following generalization of the greedy procedure above:

\vspace{4mm}

\noindent \textbf{``Tensored'' greedy procedure:} For a uniformly random $\alpha \in {{[n]} \choose 2}$, add $S_\alpha = \{ i : \iprod{G_i, G_{\alpha}} \geq k/2\}$ to the list.

\vspace{4mm}
Let us analyze this algorithm.
The same argument as before shows that $S_{\alpha} \supseteq S^*$ if $\alpha \subset S^*$.
Further, a uniformly random $\alpha$ is in $S^*$ with probability $(k/n)^2$, so repeating the above procedure $O(n/k)^2$ times includes a uniformly random sample of $\alpha \subset S^*$. 

We will now provide a generalization of \cref{lem:overview-main-analysis} with no change in parameters that allows reasoning about $|S_{\alpha} \setminus S^*|$.
The proof, however, will rely on the RIP instead of the straightforward spectral norm argument above. 

\begin{lemma} \label{lem:overview-main-analysis-2-tensor}
Let $v_1, v_2, \ldots, v_k \in \{ \pm 1\}^n$ be uniformly random and independent vectors for $k \geq O(\sqrt{n \log n})$ and let $v_{\alpha} \in \{ \pm 1\}^n$ for $\alpha \subset [k]$ of size $|\alpha|=2$ be defined by $v_{\alpha}(i) = v_{\alpha_1} (i) v_{\alpha_2}(i)$ for every $i$.
Then with probability at least $1-1/n$ over the draw of vectors $v_i$, for every $u \in \{ \pm 1 \}^n$ there are at most $O(n^2/k^2)$ vectors $v_\alpha$ such that $\iprod{u,v_\alpha} \geq k/3$. 
\end{lemma}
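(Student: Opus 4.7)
My plan is to mirror the proof of \cref{lem:overview-main-analysis}, replacing its crude operator-norm bound on the matrix of columns with a restricted isometry property (RIP) for a natural matrix built from the $v_i$'s. Assemble the $\binom{k}{2}$ columns $(v_\alpha)_\alpha$ into a matrix $H \in \{\pm 1\}^{n \times N}$ with $N = \binom{k}{2}$, and let $M = H/\sqrt{n}$, so that each column of $M$ has unit norm. Using $\E[v_\alpha(i) v_{\alpha'}(i)] = \1[\alpha = \alpha']$ one checks $\E[M^\top M] = I_N$: the rows of $M$ are $\sqrt{n}$-scaled isotropic.

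The key technical step is to establish that, with probability at least $1 - 1/n$, $M$ satisfies $(\delta, r^\ast)$-RIP with $\delta = 1/10$ and sparsity $r^\ast = C n^2/k^2$ for a sufficiently large absolute constant $C$. The rows of $M$ are i.i.d., and the $i$-th row $r_i \in \R^N$ is a quadratic Rademacher form in the $k$ independent signs $v(i) = (v_1(i),\ldots,v_k(i))$: identifying an $r^\ast$-sparse $x \in \R^N$ with the zero-diagonal symmetric $X \in \R^{k \times k}$ given by $X_{ab} = x_{\{a,b\}}/2$, we have $\langle r_i, x\rangle = v(i)^\top X v(i)/\sqrt{n}$ with $2 \|X\|_F^2 = \|x\|_2^2$. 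Hanson-Wright (equivalently, hypercontractivity for Rademacher chaos of degree $2$) then yields sharp concentration of $\|Mx\|_2^2 = \frac{1}{n}\sum_i (v(i)^\top X v(i))^2$ around its mean $\|x\|_2^2$, and an $\varepsilon$-net argument over $r^\ast$-sparse unit vectors (or an off-the-shelf RIP theorem for Khatri-Rao products of i.i.d.\ Rademacher vectors) gives RIP in our regime---where $N = O(k^2)$ keeps the net cost $r^\ast \log(N/r^\ast)$ manageable. Given RIP, fix any $u \in \{\pm 1\}^n$, let $\cB = \{\alpha : \langle u, v_\alpha\rangle \geq k/3\}$, suppose $|\cB| > r^\ast$, and pick $\cB' \subseteq \cB$ with $|\cB'| = r^\ast$; by Cauchy-Schwarz and RIP applied to the $r^\ast$-sparse indicator $\1_{\cB'}$,
\[
\tfrac{r^\ast k}{3} \leq \sum_{\alpha \in \cB'}\langle u, v_\alpha\rangle = \langle u, H \1_{\cB'}\rangle \leq \|u\|_2 \, \|H \1_{\cB'}\|_2 \leq \sqrt{n}\cdot\sqrt{n}\,(1+\delta)\sqrt{r^\ast},
\]
which rearranges to $r^\ast \leq 9(1+\delta)^2 n^2/k^2$, contradicting the choice of $r^\ast$ for $C$ large enough. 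Hence $|\cB| \leq r^\ast = O(n^2/k^2)$ for every $u$ on the (single) RIP event---no union bound over $u$ is needed.

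The main obstacle is the RIP step itself. The naive move of bounding $\|H\1_\cB\|_2 \leq \|H\|\sqrt{|\cB|}$ (as in \cref{lem:overview-main-analysis}) is too lossy because $\|H\| = \Omega(\sqrt{n} + \sqrt{N}) = \Omega(k)$ in our regime, which yields only $|\cB| \leq O(n)$ rather than the desired $O(n^2/k^2)$. Because the columns of $H$ are Hadamard products of independent Rademacher vectors, entries within a row are correlated and textbook RIP results for matrices with i.i.d.\ subgaussian entries do not apply verbatim; one must exploit the independence across rows together with the quadratic chaos structure within each row via Hanson-Wright and hypercontractivity (or via a dedicated Khatri-Rao RIP theorem). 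This is precisely where our approach leverages only the \emph{truth} of RIP for a product-type matrix drawn from the semirandom randomness, with no efficient certificate required.
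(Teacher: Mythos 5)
Your overall architecture matches the paper's: assemble the Hadamard products $v_\alpha$ into a matrix $H$, argue that $H/\sqrt{n}$ satisfies RIP at the relevant sparsity, and deduce the bound on $|\cB|$ via Cauchy--Schwarz applied to $H\1_{\cB'}$. That final deduction is correct and is essentially the paper's Lemma~\ref{clm:not-many-correlated-vectors}, and your observation that isotropy of the rows follows from $\E[v_\alpha(i)v_{\alpha'}(i)] = \1[\alpha=\alpha']$ is exactly what the paper verifies in Lemma~\ref{lem:tensoring-gives-RIP}. The divergence is in how RIP is \emph{established}, and there the Hanson--Wright plus $\varepsilon$-net route you sketch has a genuine gap.

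The problem is that $\|Mx\|_2^2 = \frac{1}{n}\sum_i q_i^2$ with $q_i = v(i)^\top X v(i)$ a degree-$2$ Rademacher chaos, so $q_i^2$ is a degree-$4$ chaos with sub-Weibull (order $1/2$) tails: for ``spiky'' sparse $x$ (e.g.\ the indicator of a clique's edge set, for which $\|X\|_{\mathrm{op}} = \Theta(\|X\|_F)$), Hanson--Wright only gives $q_i$ a subexponential tail with $\Theta(1)$ scale, and the best deviation bound for $\frac{1}{n}\sum q_i^2$ at a fixed constant level $\delta$ is of order $\exp(-c\sqrt{n})$. Meanwhile the net over $r^\ast$-sparse unit vectors has cardinality $\exp\bigl(\Theta(r^\ast\log(N/r^\ast))\bigr)$, and with $r^\ast = \Theta(n^2/k^2)$ and $k = \tilde\Theta(\sqrt{n})$ this is $\exp\bigl(\Omega(n/\operatorname{polylog} n)\bigr)$, which dwarfs $\exp(c\sqrt{n})$. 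So the crude union bound over the net cannot close. The paper sidesteps this by noticing that the rows of $H$ are i.i.d., isotropic, \emph{and entrywise bounded} by $B^t$, which is precisely the hypothesis of the bounded-orthonormal-system RIP theorem of Rudelson and Vershynin (stated as Theorem~\ref{thm:rip} and instantiated in Lemma~\ref{lem:tensoring-gives-RIP}). That theorem internally uses a generic-chaining argument rather than a pointwise union bound, and it is what absorbs the heavy-tailed chaos. Your parenthetical fallback to ``an off-the-shelf RIP theorem'' is the right instinct and is essentially what the paper does; but as written, the Hanson--Wright-plus-net argument does not go through, and the off-the-shelf route should be promoted from a parenthetical to the main line, with the boundedness of the row entries spelled out as the hypothesis that makes the cited theorem apply.
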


\paragraph{Tensored greedy succeeds for $k \geq \tilde{O}(n^{0.6})$.}
Before proving \cref{lem:overview-main-analysis-2-tensor}, let us repeat the calculations we did above to see if we improve the range of $k$ where our algorithm succeeds.
By applying the lemma above to the vectors $G_\alpha^{\cout}$ for $\alpha \in S^*$, we obtain that the number of pairs $(\alpha,j)$ such that $\iprod{G_{\alpha}^{\cout}, G_j^{\cout}} \geq k/3$ and $j \not \in S^*$ is at most $O(n^3/k^2)$.
Thus, by averaging, at least $1/2$ of the $\sim k^2$ different sets $\alpha$ must satisfy $\iprod{G_{\alpha}^{\cout},G_j^{\cout}} \geq k/3$ for at most $O(n^3/k^4)$ different $j$.
In particular, $|S_\alpha \setminus S^*|\leq O(n^3/k^4)$ for $1/2$ fraction of $\alpha \subset S^*$.
This bound is $\ll k$ if $n^3/k^4 \ll k$ or $k \gg n^{3/5} = n^{0.6}$. 

Observe that this already improves on the $n^{3/4}$ bound above.
In fact, this simple greedy algorithm already improves on the (more involved) algorithms in prior works~\cite{DBLP:conf/stoc/CharikarSV17,MMT20} that work when $k \gg n^{2/3} (\gg n^{3/5}$)! 

\paragraph{Matching the guarantee of ~\cite{MR4617517-Buhai23}.}
We now show a simple modification of the greedy procedure above that works to give an $n^{O(1/\epsilon)}$-size list when $k \sim n^{1/2+\epsilon}$, allowing us to match the running-time vs $k$ trade-off of the recent work~\cite{MR4617517-Buhai23}. 

\vspace{4mm}
\noindent \textbf{Procedure:} Fix $t \in \N$. For independent and uniformly random $\alpha_1, \alpha_2, \ldots, \alpha_t \in \binom{[n]}{[2]}$, add 
$S_{\alpha_1, \alpha_2, \ldots, \alpha_t} = \{j : \iprod{G_j,G_{\alpha_i}}\geq k/2 \text{ for every } 1 \leq i \leq t\}$
to the list.  
\vspace{3mm}

Note that tensored greedy procedure from before corresponds to $t=1$ in the above algorithm. 
Observe that, with probability $(k/n)^{2t}$, $\alpha_1, \alpha_2, \ldots, \alpha_t \subset S^*$, so repeating the procedure $O(n/k)^{2t}$ times includes a random subset of $S^*$.
As before, for such tuples $S_{\alpha_1, \alpha_2, \ldots, \alpha_t} \supseteq S^*$.
We will show that for a $0.99$-fraction of tuples $\alpha_1, \alpha_2, \ldots, \alpha_t\subset S^*$, we have $S_{\alpha_1, \alpha_2, \ldots, \alpha_t} = S^*$ if $k \geq O(n^{1/2+ 1/2t})$.
Taking $t = 1/\epsilon$ yields the trade-off obtained in~\cite{MR4617517-Buhai23}.

Let us see why: from \cref{lem:overview-main-analysis-2-tensor}, for every $j \not \in S^*$, the number of sets $\alpha \subset S^*$ such that $\iprod{G_\alpha, G_j} \geq k/2$ is at most $O(n^2/k^2)$.
Thus, if we pick $t$ uniformly random subsets $\alpha_i \subset S^*$, the chance that for a given $j$ we have $\iprod{G_{\alpha_i},G_j} \geq k/2$ for every $1 \leq i \leq t$ is at most $((n^2/k^2)/k^2)^{t} = (n^2/k^4)^t$.
If $k \geq 100 n^{1/2 + 1/2t}$, then, this chance is clearly $\ll 1/n$ and thus, by a union bound over the choice $j \not \in S^*$, with probability at least $0.99$ over the choice of the $\alpha_i$ we have $S_{\alpha_1, \alpha_2, \ldots, \alpha_t} = S^*$.

\paragraph{Proving \cref{lem:overview-main-analysis-2-tensor} using RIP.}
Let us introduce the star of the show to prove \cref{lem:overview-main-analysis-2-tensor}: the restricted isometry property. 
\begin{definition}[Restricted isometry property] \label{def:RIP}
An $m \times d$ matrix $H$ is said to be $(r,\delta)$-RIP if for every $v \in \R^d$ such that $\Norm{v}_0 \leq r$ we have
\begin{equation*}
(1-\delta) \|v\|_2 \leq \|H v\|_2 \leq (1+\delta) \|v\|_2\,.
\end{equation*}
\end{definition}

In fact, for the purpose of the analysis of the algorithm, we care only about the upper bound in the restricted isometry property. More concretely:
\begin{definition}[Sparse operator norm] \label{def:weak-RIP}
An $m \times d$ matrix $H$ is said to be $r$-sparse operator norm bounded by $C$ if for every $v \in \R^d$ such that $\Norm{v}_2\leq 1$, $\Norm{v}_0 \leq r$, we have $\Norm{H v}_2 \leq C$. 
\end{definition}
We note that sparse operator norm behaves differently compared to the sparse quadratic form in our setting (i.e., maximizing $v^{\top}Hv$ over sparse vectors for square $H$). The latter occurs naturally in the context of sparse PCA and has connections to the planted clique problem~\cite{DBLP:conf/colt/BerthetR13}.

Clearly, a bound on the sparse operator norm is implied by the RIP --- if a matrix $H$ is $(r,\delta)$-RIP, then its $r$-sparse operator norm is bounded by $1+\delta$.
Now, it is well-know that an $m \times d$ ($m \ll d$) matrix of independent, uniform $\pm 1$ entries satisfies $(r,C)$-RIP for $C =O(1)$ and $r = \Omega(m/\log (d/m))$.
We will actually establish and use the RIP of a random matrix with significantly correlated columns. 

\begin{fact}[See \Cref{lem:tensoring-gives-RIP}] \label{fact:weak-RIP-overview}
Let $H \in \R^{(n-k) \times {k \choose 2}}$ be a matrix with columns $G_{\alpha}^{\cout}$ for $\alpha \subset S^*$ of size $|\alpha|=2$. Then with high probability $H/\sqrt{n-k}$ satisfies $(r,\Oh(1))$-RIP for some $r \geq \Omega(n/\log^{O(1)}(n))$. 
\end{fact}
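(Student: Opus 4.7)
The plan is an $\varepsilon$-net plus concentration argument. The key structural observation is that although the columns of $H$ are correlated (being Hadamard products of pairs), the rows of $H$ are i.i.d.: for each $i \in [n] \setminus S^*$, the variables $\{G_{j,i}\}_{j \in S^*}$ are independent $\pm 1$ Rademachers by the random choice of cut edges, and these are drawn independently across $i$. In particular, each row of $H$ is obtained by evaluating the $\binom{k}{2}$ degree-$2$ Walsh characters at an independent Rademacher vector in $\{\pm 1\}^{S^*}$.

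Fix a unit vector $v \in \R^{\binom{k}{2}}$ with $\Norm{v}_0 \leq r$, and write $X_i = \sum_\alpha v_\alpha G_{\alpha_1, i} G_{\alpha_2, i}$, so that $\Norm{Hv}_2^2 = \sum_{i \notin S^*} X_i^2$. Each $X_i$ is a homogeneous degree-$2$ Rademacher chaos in the variables $\{G_{j, i}\}_{j \in S^*}$, and since $\{G_{\alpha_1} G_{\alpha_2}\}_\alpha$ is an orthonormal family of Walsh characters, $\E[X_i^2] = \Norm{v}_2^2 = 1$ and hence $\E[\Norm{Hv}_2^2] = n - k$. The fluctuations $\Norm{Hv}_2^2 - (n-k)$ form a degree-$4$ polynomial in $k(n-k)$ independent Rademachers of variance $O(n-k)$; via $(2,q)$-hypercontractivity, standard decoupling, and a split of the fluctuations into their degree-$2$ and degree-$4$ chaos components --- the degree-$2$ piece concentrating at Bernstein scale directly from hypercontractivity of quadratic chaos, and the degree-$4$ piece controlled via entrywise Hoeffding bounds on the off-diagonal part of $H^\top H$ (each such entry being a Rademacher sum of typical magnitude $\tilde{O}(\sqrt{n-k})$) --- one obtains Bernstein-like tails
\[
\Pr\Bigl[\,\bigl\lvert \Norm{Hv}_2^2 - (n-k)\bigr\rvert > \delta (n-k)\Bigr] \leq 2 \exp\bigl(-c\,\delta^2 (n-k) / \log^{O(1)} n\bigr)
\]
for every $\delta \in (0,1)$ and some absolute constant $c > 0$.

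The rest is a standard net-plus-approximation argument. A $1/4$-net $\mathcal{N}$ of the unit $r$-sparse vectors in $\R^{\binom{k}{2}}$ has cardinality at most $\binom{\binom{k}{2}}{r} 12^r \leq \exp(O(r \log(k^2/r)))$. Union bounding the pointwise concentration over $\mathcal{N}$ and choosing $\delta$ a small constant, the bound succeeds provided $r \log(k^2/r) \log^{O(1)} n \ll n - k$, which holds for $r \leq \Omega(n / \log^{O(1)} n)$. The usual approximation lemma (write a sparse unit vector as a geometric sum of net elements) upgrades the bound from $\mathcal{N}$ to every $r$-sparse unit vector, giving simultaneously both the upper and lower RIP bounds for $H/\sqrt{n-k}$ with constant distortion.

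The main obstacle is obtaining Bernstein-type concentration despite each row being a degree-$2$ chaos rather than a subgaussian vector: a naive application of degree-$4$ hypercontractivity produces only a sub-Weibull($1/2$) tail $\exp(-c\sqrt{\delta}(n-k)^{1/4})$, which would limit $r$ to $\tilde O(\sqrt{n})$. This is circumvented by the degree-wise decomposition above, at the price of the extra $\log^{O(1)} n$ factor that is already absorbed into the statement of the fact.
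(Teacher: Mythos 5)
There is a genuine gap. The paper proves this fact by verifying that the row distribution of $H$ is isotropic and $\ell_\infty$-bounded (the Walsh characters $\chi_\alpha(g)=\prod_{j\in\alpha}g_j$ are pairwise uncorrelated, unit-variance, and bounded by $1$) and then invoking the Rudelson--Vershynin bounded-orthonormal-systems theorem (\Cref{thm:rip}) as a black box; the generic-chaining machinery inside that theorem is precisely what pushes the sparsity to $r=\Omega(n/\log^{O(1)}n)$. You instead attempt a direct $\varepsilon$-net argument, which requires, for each fixed $r$-sparse unit vector $v$, a Bernstein-type tail $\exp(-c\delta^2(n-k)/\log^{O(1)}n)$ for $\lVert Hv\rVert_2^2-(n-k)$. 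That tail is not achievable by the methods you describe, and this is exactly why the net-plus-pointwise-concentration route is known to stall at $r\sim\sqrt{n}$ for bounded orthonormal systems.

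Concretely: after decomposing $X_i^2-1=Q_2(G_{\cdot,i})+Q_4(G_{\cdot,i})$, your treatment of the degree-$2$ piece via Hanson--Wright is fine (the block-diagonal coefficient matrix has operator norm $O(1)$ because the $k\times k$ matrix $M_{ab}=v_{\{a,b\}}$ has $\lVert M\rVert_{\mathrm{op}}\le\lVert M\rVert_F=\Theta(1)$). But for the degree-$4$ piece, the argument breaks. The entrywise Hoeffding bound on off-diagonal entries of $H^\top H$ gives $\max_{\alpha\ne\beta}|(H^\top H)_{\alpha\beta}|\le\tilde{O}(\sqrt{n})$, and then $|\sum_{\alpha\cap\beta=\emptyset}v_\alpha v_\beta(H^\top H)_{\alpha\beta}|\le\tilde{O}(\sqrt{n})\lVert v\rVert_1^2\le\tilde{O}(r\sqrt{n})$, which forces $r\lesssim\sqrt{n}/\mathrm{polylog}(n)$ — not $n/\mathrm{polylog}(n)$. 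The deterministic entrywise bound cannot be sharpened into a probabilistic one at the needed rate because $Q_4(G_{\cdot,i})$ can be as large as $\Theta(r)$ almost surely (take $v$ supported on a star, so $X_i=g_1\cdot r^{-1/2}\sum_{j>1}g_j$ and $X_i^2$ ranges up to $r$); Bernstein for a sum of $n-k$ i.i.d.\ mean-zero, variance-$O(1)$ variables bounded by $O(r)$ gives $\exp(-c\min(\delta^2 n,\ \delta n/r))$, and the union bound over the net of size $\exp(\Theta(r\log(k^2/r)))$ again collapses to $r\lesssim\sqrt{n}$. Your own ``main obstacle'' paragraph correctly identifies that degree-$4$ hypercontractivity only yields sub-Weibull$(1/2)$ tails; the proposed circumvention does not actually escape that obstruction, it just relocates it. To reach the quasi-linear sparsity claimed, you either need the Rudelson--Vershynin/Dudley chaining argument the paper appeals to, or a substantially more refined moment analysis of the degree-$4$ chaos (e.g.\ Lata{\l}a-type tensor norm bounds exploiting the block structure across $i$) — neither of which is carried out here.
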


Observe that $H$ is a function of at most $kn$ random bits even as it has $nk^2$ entries. %
Despite this highly dependent setting, it turns out that $H$ still satisfies RIP. 
This follows from a powerful result of~\cite{MR2417886-Rudelson08} that allows us to conclude that a matrix with independent rows chosen from an isotropic, $\ell_{\infty}$-bounded distribution satisfies strong RIP properties. The following theorem is usually stated in terms of \emph{bounded orthonormal systems}. In \Cref{sec:rip}, we discuss how this standard formulation is equivalent to the following version that is directly interpretable and useful for us. 
\begin{theorem}[Corollary of \cite{MR2417886-Rudelson08, FR13}]
\label{thm:rip}
Let $A \in \mathbb{R}^{m \times N}$ be a random matrix with rows sampled i.i.d. according to the distribution of some random variable $\mathbf{X} \in \mathbb{R}^N$ that satisfies $\mathbb{E} \mathbf{X}\mathbf{X}^\top = I_N$ and $\norm{\mathbf{X}}_\infty \leq K$ almost surely.
Then, for any $m \geq O(r K^2 \log^3(r) \log(N) / \delta^2)$, we have with probability at least $1 - N^{-\log^3(r)}$ that the matrix $A/\sqrt{m}$ satisfies $(r, \delta)$-RIP. 
\end{theorem}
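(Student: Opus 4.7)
The plan is to derive this statement as a direct rephrasing of the classical restricted isometry theorem for bounded orthonormal systems (BOS) due to Rudelson--Vershynin, in the form presented in Foucart--Rauhut. Recall the standard BOS formulation: if $\nu$ is a probability measure on some set $\mathcal T$ and $\phi_1,\dots,\phi_N \colon \mathcal T \to \mathbb R$ are $\nu$-orthonormal with $\|\phi_j\|_\infty \leq K$ for all $j$, then for i.i.d.\ samples $t_1,\dots,t_m \sim \nu$, the matrix $\Phi \in \mathbb R^{m\times N}$ with entries $\Phi_{ij}=\phi_j(t_i)$, rescaled by $1/\sqrt{m}$, is $(r,\delta)$-RIP with probability at least $1-N^{-\log^3 r}$ as soon as $m \geq C r K^2 \log^3 r \log N / \delta^2$.

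Second, I would exhibit the bijection between the two formulations. Take the probability space $(\mathcal T,\nu)$ to be the distribution of $\bm X$ on $\mathbb R^N$ and define the coordinate functions $\phi_j(t) \defeq t_j$ for each $j \in [N]$. The isotropy condition $\mathbb E \bm X \bm X^\top = I_N$ becomes exactly the orthonormality $\int \phi_j\phi_k\, d\nu = \mathbb E[\bm X_j \bm X_k] = \delta_{jk}$, and the almost sure bound $\|\bm X\|_\infty \leq K$ becomes the BOS uniform bound $\|\phi_j\|_\infty \leq K$. Under this identification, an i.i.d.\ sample of $m$ copies of $\bm X$ is the same as an i.i.d.\ draw $t_1,\dots,t_m \sim \nu$, and the $i$-th row of $A$ is precisely the vector $(\phi_1(t_i),\dots,\phi_N(t_i))$; i.e.\ $A = \Phi$ as matrices. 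Applying the BOS RIP theorem to $\Phi/\sqrt{m}$ yields the conclusion with exactly the stated parameters.

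The only genuinely delicate point is the sample-complexity bookkeeping: one must check that the quantitative BOS theorem of Rudelson--Vershynin (as refined and stated in Foucart--Rauhut, where the $\log^3 r \log N$ dependence appears) matches the advertised bound $m \geq O(r K^2 \log^3 r \log N / \delta^2)$ and delivers the failure probability $N^{-\log^3 r}$. Both sides use the same constants, so this is a direct citation rather than an analytic step. No modification of the BOS proof is needed; the whole content of the theorem, from this angle, is that the BOS proof never uses anything about $\phi_j$ beyond the two properties already imposed on $\bm X$.

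The main obstacle, conceptually, would be if one insisted on keeping the scalar-sampling language of the BOS framework; the step worth writing out carefully is the identification $\phi_j(t)=t_j$ together with the observation that orthonormality of coordinate functions with respect to the law of $\bm X$ is literally the isotropy of $\bm X$. Once that is in place the theorem is immediate.
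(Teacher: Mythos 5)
Your proposal is correct and matches the paper's own proof essentially verbatim: the appendix of the paper makes precisely the same identification, taking $\phi_j(t) = t_j$ (coordinate projections) so that isotropy of $\bm{X}$ becomes orthonormality of the $\phi_j$ under the law of $\bm{X}$, the $\ell_\infty$ bound becomes the BOS uniform bound, and then cites Theorem 12.31 of Foucart--Rauhut. No difference in approach to report.
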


To finish this section, let us see how the sparse operator norm bound helps settle \cref{lem:overview-main-analysis-2-tensor}. 
\begin{proof}[Proof of \cref{lem:overview-main-analysis-2-tensor}]
Let $H$ be the matrix from \cref{fact:weak-RIP-overview}, with $r$-sparse operator norm of $H/\sqrt{n-k}$ bounded by $O(1)$ for $r \geq \Omega(n/\log^{O(1)}(n))$. 
Suppose (toward a contradiction) that there is a vector $u \in \{ \pm 1\}^{n-k}$ such that $\iprod{G_{\alpha},u} \geq k/3$ for all $\alpha \in \cB$, where $\cB$ is a set of size $r$ (if the set is larger than $r$ in size, just choose any $r$-size subset).
Then, using Cauchy-Schwarz and the sparse operator norm bound on $H$,
\[
\iprod{u/\Norm{u}_2, H \1_{\cB}/\sqrt{|\cB|}} \leq \Norm{H \1_{\cB}/\sqrt{|\cB|}}_2 \leq O(\sqrt{n-k})\,,
\]  
while by the choice of $\cB$ we have
\[
\iprod{u/ \|u\|_2, H \1_{\cB} / \sqrt{|\cB|}} \geq \frac{\sqrt{|\cB|} k}{3 \sqrt{n-k}}\,.
\]
Rearranging yields that $|\cB| \leq O(n^2/k^2)$. 
\end{proof}

\paragraph{Tensoring thrice: our algorithm.} At this point, it is intuitive to try the greedy procedure with tensoring thrice.
Indeed, this is precisely our algorithm, with the crux of the analysis being the RIP of a matrix analogous to the one above. Our proof is short and simple and is presented in full in the following section.

\section{Semirandom planted clique with $p=1/2$}%
\label{sec:phalf}

We will prove the following result in this section. 
\begin{theorem}
\label{thm:phalf-main}
There exists an $O(n^{\omega+0.5})$-time algorithm, where $\omega \leq 2.372$ is the matrix multiplication exponent, that takes input a graph $G$ generated according to $\SRPC(n, k, 1/2)$ for any $k \geq O(\sqrt{n} \log^2 n)$ and with probability at least $0.99$ outputs a list of size $(1+o(1))n/k$ of $k$-cliques containing the planted clique in $G$.
\end{theorem}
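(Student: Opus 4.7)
The plan is to execute the triple-tensored greedy procedure sketched at the end of the overview, with the running time achieved via fast matrix multiplication. The algorithm samples $T = C(n/k)^3$ independent uniformly random triples $\alpha = \{a, b, c\} \in \binom{[n]}{3}$ for a sufficiently large constant $C$; for each, it forms the Hadamard product vector $G_\alpha \in \{\pm 1\}^n$ with $G_\alpha(i) = G(a,i) G(b,i) G(c,i)$ and computes $S_\alpha = \{i : \iprod{G_i, G_\alpha} \geq k/2\}$. When $|S_\alpha| \leq 10k$, it applies a standard neighbor-count pruning (retain vertices with at least $k - \polylog(n)$ neighbors within $S_\alpha$); if the result is a $k$-clique, it is added to the candidate list, which is finally shrunk to size $(1+o(1))n/k$ using the pruning procedure from the appendix.

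For correctness, two facts are needed. First, for any $\alpha \subset S^*$, the vector $G_\alpha^{\cout}$ is a uniformly random $\pm 1$-vector (the entrywise product of three fresh independent $\pm 1$ vectors from the $\cut(S^*)$ randomness), so Hoeffding and a union bound yield $\iprod{G_i, G_\alpha} \geq k - O(\sqrt{n \log n}) \geq k/2$ for every $i \in S^*$, and hence $S^* \subseteq S_\alpha$. Second, to control $|S_\alpha \setminus S^*|$, consider the matrix $H \in \R^{(n-k) \times \binom{k}{3}}$ with columns $\{G_\alpha^{\cout}\}_{\alpha \in \binom{S^*}{3}}$. Its rows are i.i.d.\ across $j \notin S^*$, drawn from the isotropic $\pm 1$-distribution on $\{\pm 1\}^{\binom{k}{3}}$ whose coordinates are the distinct degree-three monomials of $k$ independent $\pm 1$ variables; \Cref{thm:rip} applied with $m = n - k$, $N = \binom{k}{3} \leq n^3$, $K = 1$, $\delta = \Oh(1)$ then yields that $H/\sqrt{n - k}$ is $(\Oh(1), r)$-RIP with high probability for some $r = \Omega(n/\log^4 n)$. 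Repeating the proof of \Cref{lem:overview-main-analysis-2-tensor} verbatim with this RIP shows that, for every $u \in \{\pm 1\}^{n-k}$, at most $O(n^2/k^2)$ triples $\alpha \in \binom{S^*}{3}$ satisfy $\iprod{u, G_\alpha^{\cout}} \geq k/3$. Taking $u = G_j^{\cout}$ for each $j \notin S^*$ and double-counting gives $O(n^3/k^2)$ bad $(\alpha, j)$ pairs; averaging over the $\binom{k}{3} \sim k^3$ triples and applying Markov, at least a $0.99$-fraction of $\alpha \in \binom{S^*}{3}$ satisfy $|S_\alpha \setminus S^*| \leq O(n^3/k^5)$, which is $o(k)$ whenever $k \geq \Omega(\sqrt{n} \log^2 n)$. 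For any such $\alpha$, the neighbor-count pruning recovers $S^*$ exactly: every $v \in S^*$ has $k-1$ neighbors in $S^* \subseteq S_\alpha$, while each $v \notin S^*$ has only $k/2 + O(\sqrt{k \log n})$ neighbors in $S^*$ plus at most $o(k)$ in $S_\alpha \setminus S^*$. Since a random $\alpha$ lies in $\binom{S^*}{3}$ with probability $(k/n)^3$, sampling $T = C(n/k)^3$ triples yields, by Chernoff, some good $\alpha$ with probability at least $0.99$, placing $S^*$ in the candidate list.

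For the running time, stack the $T$ vectors $G_\alpha$ as rows of $A \in \{\pm 1\}^{T \times n}$; the full table of inner products is the matrix product $G A^\top$. Partitioning $A^\top$ into $\lceil T/n \rceil = O(n^2/k^3)$ blocks of width $n$ and multiplying each block by $G$ via fast matrix multiplication costs $O(n^\omega)$ per block and $O(n^{\omega+2}/k^3) = O(n^{\omega+0.5})$ in total for $k \geq \Omega(\sqrt{n})$; the remaining per-triple operations (thresholding, neighbor-count pruning of sets of size $\leq 10k$, $k$-clique verification) and the final list-pruning step fit within this budget. The main obstacle is establishing the RIP of $H$: although its $\binom{k}{3} \sim k^3$ columns are highly correlated (built from only $k(n-k)$ random bits), its \emph{rows} are i.i.d.\ isotropic $\pm 1$ vectors, fitting the hypothesis of \Cref{thm:rip}; the resulting sparsity $r = \Omega(n/\log^4 n)$ is just barely large enough to accommodate the $O(n^2/k^2)$ upper bound on the ``bad'' set $\cB$ in the RIP counting argument exactly when $k \geq \Omega(\sqrt{n}\log^2 n)$, pinpointing the sharp threshold of \Cref{thm:phalf-main}.
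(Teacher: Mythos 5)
Your proposal is correct and follows essentially the same route as the paper: sample $O((n/k)^3)$ triples, threshold inner products, establish $(\Oh(1),\Omega(n/\log^4 n))$-RIP of the column matrix $H$ of $\{G_\alpha^{\cout}\}$ via the bounded-orthonormal-system/RIP theorem, use it to bound the number of spurious vertices per good triple, refine via neighbor counting, and finish with the appendix's list-pruning. The only cosmetic departures are in the implementation of the refinement step: you filter to triples with $|S_\alpha| \leq 10k$ and do neighbor counting and $k$-clique verification naively (which indeed fits the $O(n^{\omega+0.5})$ budget since $T k^2 = O(n^3/k) \leq O(n^{2.5})$), whereas the paper (\Cref{lem:phalf-fast}) computes the refinement for all triples via two more batched matrix multiplications; both are valid and neither constitutes a genuinely different argument.
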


We will first analyze the following simple algorithm:

\begin{mdframed}
\noindent \textbf{Algorithm 1:} \begin{enumerate} 
\item For $O((n/k)^3)$ rounds, sample $\alpha \subset [n]$ with $|\alpha|=3$ and construct $S_{\alpha} := \{j: \langle G_\alpha, G_j \rangle \geq k/2\}$.
\item Construct a refined set $S_{\alpha}' \subseteq S_{\alpha}$ by removing all vertices from $S_{\alpha}$ that are connected to less than $k - 1$ of the vertices in $S_{\alpha}$, and then add $S_{\alpha}'$ to the list if it forms a $k$-clique.
\item Apply pruning (see \cref{sec:pruning}) to reduce the list size to $(1+o(1))n/k$. \end{enumerate}
\end{mdframed}

This algorithm naively runs in time $O(n^{3.5})$. Later we will discuss how to implement computationally expensive steps in this algorithm by black-box calls to a matrix multiplication oracle, improving the running time to $O(n^{\omega + 0.5})$.

We will prove two claims: First, that for all $\alpha \subset S^*$ with $|\alpha|=3$ we have $S_{\alpha} \supseteq S^*$, and second, that for any fixed $j \not\in S^*$, there are at most $O(n^2/k^2)$ sets $\alpha \subset S^*$ with $|\alpha| = 3$ such that $j \in S_{\alpha}$. The first claim:

\begin{lemma}
\label{lem:all-good}
Let $k \geq O(\sqrt{n \log n})$. With high probability, for all $\alpha \subset S^*$ with $|\alpha|=3$ and all $j \in S^*$ with $j \not\in \alpha$, we have that $\langle G_\alpha, G_j\rangle \geq k / 2$.
\end{lemma}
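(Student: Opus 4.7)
The plan is a direct computation splitting the inner product into its contributions from inside and outside $S^*$, combined with Hoeffding's inequality and a union bound. Write
\[
\langle G_\alpha, G_j \rangle \;=\; \sum_{i \in S^*} G_\alpha(i) \, G_j(i) \;+\; \sum_{i \notin S^*} G_\alpha(i) \, G_j(i) \mper
\]

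For the first sum, since $\alpha \cup \{j\} \subseteq S^*$, for every $i \in S^* \setminus (\alpha \cup \{j\})$ the four entries $G_{\alpha_1,i}, G_{\alpha_2,i}, G_{\alpha_3,i}, G_{j,i}$ are all $+1$ because the corresponding edges lie inside the planted clique. Hence each such $i$ contributes $+1$, and this sum equals $k - O(1)$, the $O(1)$ correction accounting for the four index values $i \in \alpha \cup \{j\}$ whose terms involve diagonal entries of the adjacency matrix.

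For the second sum, I would use that for each fixed $i \notin S^*$ the four edges $\{\alpha_1,i\}, \{\alpha_2,i\}, \{\alpha_3,i\}, \{j,i\}$ belong to $\cut(S^*)$ and so, by the semirandom model with $p=1/2$, their $\pm 1$ indicators are independent uniform. Consequently the product $G_\alpha(i)\, G_j(i) = G_{\alpha_1,i} G_{\alpha_2,i} G_{\alpha_3,i} G_{j,i}$ is itself uniform in $\{\pm 1\}$. Moreover, these products are \emph{mutually independent} across distinct $i \notin S^*$ because the underlying four random edges are disjoint for different $i$. Applying Hoeffding's inequality to this sum of $n-k$ i.i.d.\ Rademacher variables yields that, for some absolute constant $C > 0$,
\[
\Pr\!\left[\,\Big|\sum_{i \notin S^*} G_\alpha(i) G_j(i)\Big| \geq C \sqrt{n \log n}\,\right] \;\leq\; n^{-10} \mper
\]

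I would then take a union bound over the at most $\binom{k}{3}(k-3) \leq n^4$ choices of the pair $(\alpha, j)$ with $\alpha \subset S^*$, $|\alpha| = 3$, $j \in S^* \setminus \alpha$, leaving overall failure probability $o(1)$. Combining the two estimates gives $\langle G_\alpha, G_j \rangle \geq k - O(1) - C\sqrt{n \log n} \geq k/2$ once $k \geq C'\sqrt{n \log n}$ for a sufficiently large constant $C'$. There is no real obstacle here; the only minor care points are (i) handling the at most four ``diagonal'' indices inside $S^*$ (absorbed as an $O(1)$ additive loss), and (ii) verifying the independence across $i \notin S^*$, which follows because distinct $i$'s pick out disjoint coordinates of the cut randomness.
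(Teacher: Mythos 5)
Your proposal is correct and follows essentially the same route as the paper's proof: split the inner product into its $S^*$ and $[n]\setminus S^*$ contributions, observe that the in-part is $k$ up to an $O(1)$ correction, show the out-part is a sum of $n-k$ independent Rademacher variables (the products $G_{\alpha_1,i}G_{\alpha_2,i}G_{\alpha_3,i}G_{j,i}$ over disjoint cut edges), bound it by Hoeffding, and union bound over the $\mathrm{poly}(n)$ pairs $(\alpha,j)$. Your explicit treatment of the at most four diagonal indices in $\alpha\cup\{j\}$ is slightly more careful than the paper's statement that the in-part equals exactly $k$, but the difference is negligible.
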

\begin{proof}
We have that $\langle G_\alpha^{\cin}, G_j^\cin\rangle = k$.
On the other hand, $G_\alpha^{\cout}$ and $G_j^{\cout}$ have independent Rademacher entries, so by Hoeffding's inequality 
\[\mathbb{P}\Paren{|\langle G_\alpha^\cout, G_j^\cout \rangle| \geq t} \leq 2\exp\Paren{-\Omega(t^2/n)}\,.\]
Taking $t=O(\sqrt{n \log n})$ large enough, we have that $|\langle G_\alpha^\cout, G_j^\cout \rangle| \leq t$ with probability at least $1-1/n^{10}$.
We choose $k$ such that $t \leq k/2$, so $\langle G_\alpha, G_j\rangle \geq k-t \geq k/2$.
Finally, an application of the union bound shows that this holds simultaneously for all choices of $\alpha$ and $j$ with probability at least $1-o(1)$.
\end{proof}

Next, we want to prove that each $j \not\in S^*$ satisfies $\langle G_\alpha, G_j\rangle \geq k/2$ for at most $O(n^2/k^2)$ sets $\alpha \subset S^*$ with $|\alpha|=3$.
It is easy to bound $|\langle G_\alpha^\cin, G_j^\cin\rangle| \leq O(\sqrt{k \log n})$ using the fact that $G_j^\cin$ has independent Rademacher entries.
To talk about $\langle G_\alpha^\cout, G_j^\cout\rangle$, we define the matrix $H \in \{\pm 1\}^{(n-k)\times \Theta(k^3)}$ with columns $G_\alpha^\cout$ for all $\alpha \subset S^*$ with $|\alpha|=3$.
A simple observation shows that a bound on the $r$-sparse operator norm of the matrix $H / \sqrt{n-k}$ ensures the desired bound. 

\begin{lemma}
\label{clm:not-many-correlated-vectors}
Let $H \in \R^{q \times m}$ be a matrix satisfying
\begin{equation}
\sup_{\|v\|_2 \leq 1, \|v\|_0 \leq r} \|Hv\| \leq C\,, \label{eq:clm-rip}
\end{equation}
with $r \geq C^2 / \tau^2 + 1$. Then any vector $w \in \R^q$ with $\|w\|_2 \leq 1$ has inner product greater than or equal to $\tau$ with at most $C^2 / \tau^2$ columns of $H$.
\end{lemma}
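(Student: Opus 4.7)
The plan is a short contradiction argument via Cauchy--Schwarz and the sparse operator norm bound, essentially the same template used in the proofs of \cref{lem:overview-main-analysis} and \cref{lem:overview-main-analysis-2-tensor} in the overview section. Denote the columns of $H$ by $h_1, \ldots, h_m$ and let $\cB \subseteq [m]$ be the set of indices $i$ such that $\iprod{w, h_i} \geq \tau$. I want to show $|\cB| \leq C^2/\tau^2$.

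Suppose for contradiction that $|\cB| > C^2/\tau^2$. Pick any subset $\cB' \subseteq \cB$ of size $s$ where $s$ is an integer satisfying $C^2/\tau^2 < s \leq r$; such an integer exists because the hypothesis $r \geq C^2/\tau^2 + 1$ guarantees $\lfloor C^2/\tau^2 \rfloor + 1 \leq r$, and $|\cB|$ is large enough to accommodate picking $s$ of its elements. Consider the vector $v = \1_{\cB'}/\sqrt{s} \in \R^m$, which has $\|v\|_0 = s \leq r$ and $\|v\|_2 = 1$, so \cref{eq:clm-rip} applies and gives $\|Hv\|_2 \leq C$.

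On one hand, by Cauchy--Schwarz and $\|w\|_2 \leq 1$,
\[
\iprod{w, Hv} \leq \|w\|_2 \cdot \|Hv\|_2 \leq C\,.
\]
On the other hand, by expanding $Hv = \frac{1}{\sqrt{s}}\sum_{i \in \cB'} h_i$ and using the defining property of $\cB$,
\[
\iprod{w, Hv} = \frac{1}{\sqrt{s}} \sum_{i \in \cB'} \iprod{w, h_i} \geq \frac{s \tau}{\sqrt{s}} = \sqrt{s}\, \tau\,.
\]
Combining the two inequalities yields $\sqrt{s}\, \tau \leq C$, i.e., $s \leq C^2/\tau^2$, contradicting the choice $s > C^2/\tau^2$.

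There is no real obstacle here: the only delicate point is making sure the sparsity $s$ we use is simultaneously strictly greater than $C^2/\tau^2$ (to get the contradiction) and at most $r$ (to be allowed to invoke the sparse operator norm bound). The hypothesis $r \geq C^2/\tau^2 + 1$ is exactly what reconciles these two constraints, which is presumably why the lemma is stated with that specific additive slack.
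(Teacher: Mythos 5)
Your proof is correct and is essentially the same Cauchy--Schwarz argument the paper uses; the paper works directly with the unnormalized indicator vector $\mathbf{1}_{\cB}$ (getting $\|H\mathbf{1}_\cB\| \leq C\sqrt{|\cB|}$ from \eqref{eq:clm-rip}) while you normalize first, and the paper leaves implicit the step you spell out, namely that the hypothesis $r \geq C^2/\tau^2 + 1$ lets you choose a subset whose size is simultaneously $\leq r$ and $> C^2/\tau^2$.
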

\begin{proof}
Let $\cB \subset [m]$ be any set of size $|\cB| \leq r$, such that for all $t \in \cB$ we have $\inprod{H_{\cdot, t}, w} \geq \tau$. Then on one hand we have by Cauchy-Schwarz and~\eqref{eq:clm-rip} that
\begin{equation*}
    \inprod{H\mathbf{1}_\cB, w} \leq \|H \mathbf{1}_\cB\| \cdot \|w\| \leq  C\sqrt{|\cB|}\,,
    \end{equation*}
and on the other hand we have by the definition of $\cB$ that 
\begin{equation*}
    \inprod{H\mathbf{1}_\cB, w} = \sum_{t \in \cB} \inprod{H_{\cdot, t}, w} \geq \tau |\cB|\,.
    \end{equation*}
This leads, after rearranging, to $|\cB| \leq C^2/\tau^{2}$.
\end{proof}
If $H \in \R^{q \times m}$ was indeed a matrix with independent Rademacher entries, it is well-known that $H/\sqrt{q}$ would satisfy RIP of order $r = \Omega(q / \log(m/q))$.
For the matrix $H$ generated by taking the products of $3$-tuples of random signs that we care about, the columns are heavily dependent. It turns out, however, that we can still establish that $H / \sqrt{q}$ satisfies the restricted isometry property (with only a slight degradation in the sparsity parameter $r$).
\begin{lemma}
\label{lem:tensoring-gives-RIP}
Let $\mathcal{D}$ be any distribution over $\mathbb{R}$ such that $\mathbf{x} \sim \cD$ satisfies $\E \mathbf x = 0, \E \mathbf x^2 = 1$, and $|\mathbf x| \leq B$ almost surely.

Consider a matrix $H \in \mathbb{R}^{q \times \Theta(k^t)}$ with independent rows, where the columns are indexed by subsets $\alpha \subset [k]$ of size $t$ for constant $t$, and the rows are generated by first drawing $\mathbf{X}_i \sim \cD^k$ and setting $H_{i, \alpha} = \prod_{j \in \alpha} X_{i, j}$.
Then with high probability $H/\sqrt{q}$ satisfies $(r, \Oh(1))$-RIP with $r = \Omega(q / (B^{2t} \log^4(n)))$.
\end{lemma}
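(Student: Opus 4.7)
The plan is to reduce the statement to a direct invocation of \cref{thm:rip} (Rudelson's RIP theorem for bounded isotropic random vectors). Concretely, the rows of $H$ are i.i.d.\ copies of the random vector $\bm{Y} \in \R^{N}$ with $N = \binom{k}{t}$ whose coordinates are indexed by $t$-subsets $\alpha \subset [k]$ and defined by $\bm{Y}_\alpha = \prod_{j \in \alpha} \bm{X}_j$, where $\bm{X} \sim \cD^k$. So I only need to verify the two hypotheses of \cref{thm:rip}, namely isotropy ($\E \bm{Y}\bm{Y}^\top = I_N$) and the $\ell_\infty$ bound $\|\bm{Y}\|_\infty \leq K$ almost surely, and then read off the resulting sparsity parameter.

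\textbf{Isotropy.} For any two $t$-subsets $\alpha, \beta \subset [k]$, independence of the coordinates of $\bm{X}$ gives
\begin{equation*}
\E \bm{Y}_\alpha \bm{Y}_\beta \;=\; \E \prod_{j \in \alpha \triangle \beta} \bm{X}_j \cdot \prod_{j \in \alpha \cap \beta} \bm{X}_j^2 \;=\; \prod_{j \in \alpha \triangle \beta} \E \bm{X}_j \cdot \prod_{j \in \alpha \cap \beta} \E \bm{X}_j^2.
\end{equation*}
If $\alpha \neq \beta$ then $\alpha \triangle \beta$ is nonempty, and the corresponding factor $\E \bm{X}_j = 0$ kills the product. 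If $\alpha = \beta$ the product equals $(\E \bm{X}_j^2)^t = 1$. Hence $\E \bm{Y}\bm{Y}^\top = I_N$.

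\textbf{$\ell_\infty$ bound.} Since $|\bm{X}_j| \leq B$ almost surely, every entry of $\bm{Y}$ satisfies $|\bm{Y}_\alpha| \leq B^t$ almost surely, so we may take $K = B^t$ in \cref{thm:rip}.

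\textbf{Applying the theorem.} Plug $m = q$, $N = \binom{k}{t} \leq k^t \leq n^t$, $K = B^t$, and $\delta = \Oh(1)$ into \cref{thm:rip}. The theorem yields $(r, \Oh(1))$-RIP for $H/\sqrt{q}$ with probability at least $1 - N^{-\log^3 r}$ as long as
\begin{equation*}
q \;\geq\; C \cdot r \, B^{2t} \, \log^3(r)\, \log(N)
\end{equation*}
for some absolute constant $C$. Since $t$ is a constant and $N \leq n^t$, we have $\log N = \Oh(\log n)$, and since $r \leq N \leq n^t$, also $\log r = \Oh(\log n)$. Hence the inequality above is satisfied whenever $r = \Omega(q / (B^{2t} \log^4 n))$, as claimed. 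Because $\log^3(r) \log(N) = \Oh(\log^4 n)$, the high-probability bound $1 - N^{-\log^3 r}$ is $1 - n^{-\omega(1)}$.

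\textbf{Main obstacle.} There is no real obstacle beyond the isotropy verification, which is where the multiplicative (Hadamard-like) structure of the columns is essential: the fact that $\bm{Y}_\alpha \bm{Y}_\beta = \prod_{j \in \alpha \triangle \beta} \bm{X}_j \cdot \prod_{j \in \alpha \cap \beta} \bm{X}_j^2$ is again a product of independent mean-zero factors (whenever $\alpha \neq \beta$) is precisely what allows the heavily correlated columns of $H$ to still form a bounded orthonormal system, at the cost of only a $B^{2t}$ factor in the sparsity. The rest is a mechanical application of \cref{thm:rip}.
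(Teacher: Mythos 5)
Your proof is correct and follows essentially the same route as the paper's: verify that the rows of $H$ are i.i.d.\ draws of an isotropic random vector with $\ell_\infty$ bound $B^t$ (via the symmetric-difference computation showing cross-moments vanish and diagonal moments equal one), and then apply \cref{thm:rip} mechanically, with $\log^3(r)\log(N) = O(\log^4 n)$ accounting for the stated sparsity loss. The only difference is that you spell out the final bookkeeping ($N \le n^t$, $K = B^t$, $\delta = O(1)$) slightly more explicitly than the paper does.
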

\begin{proof}
We want to apply~\Cref{thm:rip} to conclude that $H / \sqrt{q}$ satisfies RIP. To this end, all we need to show is that entries within any row of $H$ are uncorrelated and have variance one.

For any pair of distinct tuples $\alpha, \beta$, we have
\begin{align*}
\E [H_{i, \alpha} H_{i, \beta}]
&= \E \left[\Paren{\prod_{j \in \alpha} \mathbf X_{i,j}}\Paren{\prod_{j \in \beta} \mathbf X_{i,j}}\right]\\
&= \prod_{j \in \alpha\Delta \beta} \E[\mathbf X_{i,j}]\\
&= 0\,,
\end{align*}
where $\alpha \Delta \beta$ is the symmetric difference between sets $\alpha$ and $\beta$ (since for $j \in \alpha \cap \beta$ the term $\mathbf X_{i,j}^2$ appears in the monomial and $\E [\mathbf X_{i,j}^2] = 1$). 
Similarly $\E[H_{i,\alpha}^2] = \prod_{j\in \alpha} \E [\mathbf X_{i,j}^2] = 1$. Finally, since $|\mathbf X_{i,j}| \leq B$ for all $j$, we have $|H_{i, \alpha}| = \prod_{j \in \alpha} |\mathbf X_{i,j}| \leq B^t$.
Therefore the conditions of~\Cref{thm:rip} are satisfied for the matrix $H$, and the conclusion coincides with the lemma statement.
\end{proof}

We prove now the second claim.
\begin{lemma}
\label{lem:few-bad}
Let $k \geq O(\sqrt{n} \log^2 n)$. With high probability, for any $j \not\in S^*$, there exist at most $O(n^2/k^2)$ sets $\alpha \subset S^*$ with $|\alpha|=3$ such that $\langle G_\alpha, G_j\rangle \geq k / 2$.
\end{lemma}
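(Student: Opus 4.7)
}
The plan is to split the inner product into its ``inside'' and ``outside'' contributions, dispose of the inside part by a Hoeffding plus union bound, and then invoke the RIP machinery developed in \cref{lem:tensoring-gives-RIP,clm:not-many-correlated-vectors} on the outside part. First, for any $j \notin S^*$ and any $\alpha \subset S^*$ with $|\alpha|=3$, we write $\iprod{G_\alpha, G_j} = \iprod{G_\alpha^{\cin}, G_j^{\cin}} + \iprod{G_\alpha^{\cout}, G_j^{\cout}}$. Because all clique edges between $\alpha$ and $S^* \setminus \alpha$ are present, $G_\alpha^{\cin}(i) = 1$ for every $i \in S^* \setminus \alpha$, while $G_j^{\cin}$ has independent Rademacher entries (the random cut edges from $j$ into $S^*$). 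Hoeffding then gives $|\iprod{G_\alpha^{\cin}, G_j^{\cin}}| \leq O(\sqrt{k\log n})$ with probability $\geq 1-n^{-10}$, and a union bound over the $\leq n \cdot k^3$ pairs $(\alpha,j)$ makes this uniform. Consequently, every pair satisfying $\iprod{G_\alpha,G_j} \geq k/2$ must also satisfy $\iprod{G_\alpha^{\cout}, G_j^{\cout}} \geq k/2 - O(\sqrt{k\log n}) \geq k/3$ as soon as $k$ is larger than a sufficient multiple of $\log n$.

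Next, I assemble the matrix $H \in \{\pm 1\}^{(n-k)\times \binom{k}{3}}$ whose columns are $G_\alpha^{\cout}$ for $\alpha \subset S^*$ with $|\alpha|=3$. Since each row of $H$ is built from independent Rademacher entries (one per cut-edge from some vertex in $[n]\setminus S^*$ to $S^*$), \cref{lem:tensoring-gives-RIP} applied with $t=3$, $B=1$, and $q=n-k$ shows that, with high probability over the cut-edge randomness, $H/\sqrt{n-k}$ is $(r, O(1))$-RIP for some $r = \Omega(n/\log^4 n)$.

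On this high-probability event I apply \cref{clm:not-many-correlated-vectors} to $H/\sqrt{n-k}$, taking $w = G_j^{\cout}/\sqrt{n-k}$ (which has $\ell_2$-norm one) and $\tau = (k/3)/(n-k)$. Since RIP implies the $r$-sparse operator norm bound $C = O(1)$, the conclusion is that at most $C^2/\tau^2 = O((n-k)^2/k^2) = O(n^2/k^2)$ columns $\alpha$ have inner product with $w$ at least $\tau$, i.e., at most $O(n^2/k^2)$ sets $\alpha$ satisfy $\iprod{G_\alpha^{\cout}, G_j^{\cout}} \geq k/3$. Crucially, the RIP event depends only on $H$, so a single random event already yields the conclusion simultaneously for every $j \notin S^*$.

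The one real parameter check---and the main (though minor) obstacle---is ensuring that the RIP sparsity $r$ is large enough for \cref{clm:not-many-correlated-vectors}, namely $r \geq C^2/\tau^2 + 1 = \Theta(n^2/k^2)$. Combining with $r = \Omega(n/\log^4 n)$ gives exactly the assumption $k \geq O(\sqrt{n}\,\log^2 n)$, and this is precisely where the $\log^2 n$ slack in the theorem's hypothesis enters. Sharpening this threshold would require an improved RIP bound for the tensor-column matrix $H$, beyond what \cref{thm:rip} gives directly.
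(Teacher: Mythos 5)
Your proposal is correct and follows the paper's proof essentially step for step: the inside/outside decomposition, the Hoeffding-plus-union-bound dispatch of $\iprod{G_\alpha^{\cin}, G_j^{\cin}}$, the construction of $H$ from the out-projections, the appeal to \cref{lem:tensoring-gives-RIP} followed by \cref{clm:not-many-correlated-vectors} with $\tau = k/(3(n-k))$, and the parameter check $r \gtrsim n^2/k^2$ forcing $k \geq O(\sqrt{n}\log^2 n)$. The only cosmetic differences are the order in which the inside and outside contributions are handled and the extra explicitness of your union-bound accounting; both are the same argument.
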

\begin{proof}
Let $H \in \{\pm 1\}^{(n-k)\times \Theta(k^3)}$ be the matrix with columns $G_\alpha^\cout$ for all $\alpha \subset S^*$ with $|\alpha|=3$.
By~\Cref{lem:tensoring-gives-RIP}, the matrix $H/\sqrt{n-k}$ satisfies $(r,O(1))$-RIP for $r = \Omega((n-k)/\log^4(n))$.
Then we apply \Cref{clm:not-many-correlated-vectors} for $H/\sqrt{n-k}$ and $\tau = k/(3(n-k))$. Then no unit vector has correlation at least $\tau = k/(3(n-k))$ with more than $O(1 / \tau^2) \leq O(n^2/k^2)$ columns of $H/\sqrt{n-k}$.
Then also no vector in $\{\pm 1\}^{n-k}$ (of norm $\sqrt{n-k}$) has correlation at least $k/3$ with more than $O(n^2/k^2)$ columns of $H$.
We need to ensure that the conditions of \Cref{clm:not-many-correlated-vectors} are satisfied, namely that $r \gtrsim n^2/k^2$. This holds when $k \geq O(\sqrt{n} \log^2 n)$.

We can now finish the argument:
If a vector $G_j^\cout$ has correlation at most $k/3$ with $G_\alpha^\cout$, then also $\inprod{G_\alpha, G_j} \leq k/3 + O(\sqrt{k \log n}) < k/2$. 
Hence each vector $G_j$ with $j \not\in S^*$ has correlation greater than or equal to $k/2$ with at most $O(n^2/k^2)$ vectors $G_\alpha$.
\end{proof}
We are now ready to prove the correctness of Algorithm~1.
\begin{lemma}
\label{lem:phalf-correct}
Given a graph $G$ generated according to $\SRPC(n, k, 1/2)$ for any $k \geq O(\sqrt{n} \log^2 n)$, Algorithm~1 outputs with probability at least $0.99$ a list of size $(1+o(1))n/k$ of $k$-cliques containing the planted clique in $G$.
\end{lemma}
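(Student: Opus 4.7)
The plan is to combine \Cref{lem:all-good} and \Cref{lem:few-bad} through a simple averaging argument to identify a large fraction of ``good'' triples $\alpha \subset S^*$ whose refined set $S_\alpha'$ equals $S^*$ exactly, and then invoke the pruning primitive from \Cref{sec:pruning} to trim the list to the promised $(1+o(1))n/k$ entries without discarding $S^*$.

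First, I would condition on the high-probability events of \Cref{lem:all-good} and \Cref{lem:few-bad}, each of which holds with probability $1-o(1)$ over the draw of $G$. Together they give $S^* \subseteq S_\alpha$ for every $\alpha \in \binom{S^*}{3}$, and, for every fixed $j \notin S^*$, at most $O(n^2/k^2)$ choices of $\alpha \in \binom{S^*}{3}$ with $j \in S_\alpha$. Summing over $j \notin S^*$ yields at most $O(n \cdot n^2/k^2) = O(n^3/k^2)$ bad pairs $(\alpha, j)$, so by Markov's inequality a $(1-o(1))$-fraction of $\alpha \in \binom{S^*}{3}$ --- which I call \emph{good} --- satisfy $|S_\alpha \setminus S^*| \leq O(n^3/k^5)$. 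The assumption $k \geq O(\sqrt{n}\log^2 n)$ forces this bound to be $o(k)$.

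Next I would verify that for a good $\alpha$ the refinement step outputs exactly $S^*$. Every $v \in S^*$ has the remaining $k-1$ vertices of $S^* \subseteq S_\alpha$ as neighbors inside $S_\alpha$, so it survives. For $j \in S_\alpha \setminus S^*$, the number of neighbors of $j$ inside $S_\alpha$ is at most $|N(j) \cap S^*| + |S_\alpha \setminus S^*| - 1$; applying Hoeffding to the $k$ independent random cut-edges incident to $j$ (step~2 of the semirandom model) and a union bound over all $j \notin S^*$, with probability $1-o(1)$ every such $j$ satisfies $|N(j) \cap S^*| \leq k/2 + O(\sqrt{k\log n})$, so the total count is strictly less than $k-1$ and $j$ is pruned. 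Hence $S_\alpha' = S^*$, which is a $k$-clique and is therefore added to the list.

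Finally, a uniformly random $\alpha \in \binom{[n]}{3}$ is a good triple with probability $\Omega((k/n)^3)$, so $O((n/k)^3)$ independent rounds produce at least one good $\alpha$ with probability at least $0.999$. A final invocation of the pruning procedure from \Cref{sec:pruning} reduces the list to $(1+o(1))n/k$ entries while retaining $S^*$ (valid whenever $k \geq O(\sqrt{n\log n})$). The main obstacle will be bookkeeping the several $1-o(1)$ events --- the two structural lemmas, the Hoeffding bound on cut-degrees, and the randomness of sampling --- so that their conjunction still has probability at least $0.99$; this is routine by tightening the hidden constants in each piece, and no new ingredient beyond \Cref{lem:few-bad} is required.
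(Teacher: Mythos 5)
Your proposal is correct and follows essentially the same route as the paper's proof: condition on \Cref{lem:all-good} and \Cref{lem:few-bad}, bound the number of ``bad'' pairs $(\alpha,j)$, apply Markov over random $\alpha\subset S^*$ to get $|S_\alpha\setminus S^*|=o(k)$ for most $\alpha$, argue the degree-based refinement then recovers $S^*$ exactly, and finish with the pruning step. The only cosmetic difference is bookkeeping: you state a $(1-o(1))$-fraction of good $\alpha$ at bound $O(n^3/k^5)$, whereas Markov directly gives a $0.99$-fraction at a constant multiple of the average (or a $(1-1/\log n)$-fraction at bound $O(n^3/k^5\cdot\log n)$); either way the bound remains $o(k)$ and the argument goes through.
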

\begin{proof}

Consider a random $\alpha \subset S^*$ with $|\alpha|=3$.
\Cref{lem:all-good} implies that with high probability $S_\alpha \supseteq S^*$.
Furthermore, \Cref{lem:few-bad} says that, for each $j \not\in S^*$, with high probability $G_j$ has large inner product with at most $O(n^2/k^2)$ vectors $G_{\alpha}$ with $\alpha \subset S^*$ and $|\alpha|=3$.
Then $G_j$ has large inner product with $G_\alpha$ for random $\alpha \subset S^*$ with probability at most $O((n^2/k^2)/k^3) = O(1/k)$, using that there are $\Omega(k^3)$ choices for $\alpha$.
Hence the expected number of $G_j$ (for $j \not\in S^*$) that have large inner product with $G_\alpha$ is at most $O(n/k)$.
Therefore, by Markov's inequality, with probability at least $0.99$ over the choice $\alpha \subset S^*$ we have that $|S_{\alpha} \setminus S^*| \leq O(n/k) = o(k)$.

Once this is the case, note that removing vertices of degree smaller than $k-1$ in the induced graph given by $S_\alpha$ clearly does not affect vertices in $S^*$. On the other hand, for any $j \not \in S^*$, denoting by $N(j)$ the neighborhood of the vertex $j$ in $G$, we have
\begin{align*}
|N(j) \cap S_{\alpha}|
&\leq |N(j) \cap S^*| + |N(j) \cap (S_\alpha \setminus S^*)|\\
&\leq k/2 + \Oh(\sqrt{k \log n}) + o(k)\\
&< k-1\,.
\end{align*}
Then the pruning algorithm described in \Cref{sec:pruning} keeps $S^*$ in the final list, while reducing the size of the list to $(1 + o(1)) n/k$.
\end{proof}
We discuss now how, using fast matrix multiplication, Algorithm~1 can be implemented in time $O(n^{\omega + 0.5})$.
\begin{lemma}
\label{lem:phalf-fast}
Algorithm~1 can be equivalently implemented in time $O(T n^{\omega})$ where $T = \max((n/k)^3/n, 1)$ by invoking $O(T)$ calls to the $n\times n$ matrix-multiplication oracle. In particular, when $k \geq \sqrt{n}$, this yields $O(\sqrt{n})$ calls to the matrix multiplication oracle, and a total running time $O(n^{\omega + 0.5})$.
\end{lemma}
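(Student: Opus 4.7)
The plan is to reduce Algorithm~1 to $O(T)$ black-box calls of an $n \times n$ matrix-multiplication oracle, where $T = \max((n/k)^3/n, 1)$, by batching the inner-product computations of step~1.

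Step~1 requires, for each of the $m = O((n/k)^3)$ sampled triples $\alpha$, the full vector of inner products $(\inprod{G_\alpha, G_j})_{j \in [n]}$ in order to construct $S_\alpha$. The key observation is that if we stack the vectors $G_{\alpha^{(1)}},\dots,G_{\alpha^{(m)}}$ as the rows of a matrix $V \in \{\pm 1\}^{m \times n}$, then the desired $mn$ inner products are exactly the entries of the single product $V G^\top \in \R^{m \times n}$. First I would partition the rows of $V$ into $T = \max(\lceil m/n \rceil, 1)$ blocks of at most $n$ consecutive rows, pad each block to an $n \times n$ matrix with zeros if necessary, and invoke the matrix-multiplication oracle to compute each block times $G^\top$. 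This produces $V G^\top$ in $O(T n^\omega)$ time using $T$ oracle calls.

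Next I would threshold each row of $V G^\top$ at $k/2$ to read off the corresponding $S_\alpha$ in $O(n)$ time (total $O(mn) = O(n^4/k^3)$). For each $\alpha$ with $|S_\alpha| \leq 2k$, I would spend $O(k^2)$ time to compute degrees in the induced subgraph on $S_\alpha$, remove vertices of degree less than $k - 1$, and test whether the resulting set is a $k$-clique (total $O(m k^2) = O(n^3/k)$); triples with $|S_\alpha| > 2k$ can be discarded, since such sets cannot refine into a $k$-clique. Correctness is inherited directly from \Cref{lem:phalf-correct}: the distribution of sampled triples is unchanged, so the ``good'' random $\alpha \subset S^*$ guaranteed by that proof (with $|S_\alpha \setminus S^*| = o(k)$, hence $|S_\alpha| \leq k + o(k) \leq 2k$) still passes the filter and gets refined to $S^*$. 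The final pruning step from \Cref{sec:pruning} runs in polynomial time, independent of matrix multiplication.

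The main step to verify is that the auxiliary costs $O(n^4/k^3)$ and $O(n^3/k)$ are both dominated by $O(T n^\omega)$ throughout the range $k \geq \sqrt{n}$. This is a short two-case arithmetic check. When $k \leq n^{2/3}$, we have $T n^\omega = n^{\omega+2}/k^3$, which dominates $n^4/k^3$ because $\omega \geq 2$ and dominates $n^3/k$ because $k \leq n^{2/3} \leq n^{(\omega-1)/2}$. When $k \geq n^{2/3}$, we have $T = 1$ and $n^\omega$ dominates $n^4/k^3$ (since $k \geq n^{(4-\omega)/3}$) as well as $n^3/k$ (since $k \geq n^{3-\omega}$, using $\omega > 7/3$). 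Specializing to $k = \sqrt{n}$ gives $T = n^{1/2}$ oracle calls and total running time $O(n^{\omega+0.5})$.
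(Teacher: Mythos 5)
Your approach diverges from the paper's after the inner-product batching. The paper implements all three computational steps (inner products to get $S_\alpha$, degree counts within $S_\alpha$ to get $S'_\alpha$, and the final clique check) as matrix multiplications, so the entire running time is genuinely $O(T n^\omega)$ regardless of the precise value of $\omega$. You only batch the inner products, then handle the refinement of $S_\alpha$ by brute force after filtering out triples with $|S_\alpha| > 2k$. This works — the ``good'' $\alpha \subset S^*$ with $|S_\alpha| \le k + o(k)$ pass your filter, and the extraneous list members you may miss or add are handled by the pruning step, so correctness is preserved. (Your parenthetical claim that sets with $|S_\alpha| > 2k$ ``cannot refine into a $k$-clique'' is false — a large $S_\alpha$ containing a $k$-clique plus many low-degree vertices would still refine to that clique — but you don't actually need it; the argument that the good $\alpha$ pass the filter is the load-bearing one, and you give it.)

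The substantive issue is your domination check. Your brute-force refinement costs $O(n^3/k)$, and you correctly observe that bounding this by $O(T n^\omega)$ in the $T = 1$ regime requires $\omega \ge 7/3$. With the paper's stated $\omega \le 2.372$ this is (just barely) satisfied, so your implementation meets the lemma's stated bound, but you've introduced a dependence on $\omega > 7/3$ that the paper's all-matrix-multiplication implementation does not have: if $\omega$ were improved below $7/3$, the paper's implementation would still run in $O(T n^\omega)$, while yours would be bottlenecked at $\Theta(n^3/k)$ for $k$ near $n^{2/3}$. For the specific ``in particular'' claim — $k \ge \sqrt{n}$ giving $O(n^{\omega + 0.5})$ — your analysis is unconditionally fine, since $n^3/k \le n^{2.5} \le n^{\omega+0.5}$ for all $\omega \ge 2$. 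But the paper's route is cleaner and more robust: it reduces the $S'_\alpha$ computation and the clique check to two more rounds of $n \times n$ matrix products (multiplying indicator matrices of the current sets by the adjacency matrix), so no step ever escapes the $O(T n^\omega)$ envelope.
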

\begin{proof}
We can compute all inner products $\iprod{G_{\alpha}, G_j}$ simultaneously for all $O((n/k)^3)$ sampled $\alpha$: If we arrange the vectors $G_{\alpha}$ as the rows of a $O((n/k)^3) \times n$ matrix, and we multiply this matrix by the matrix $G \in \{\pm 1\}^{n\times n}$ --- the $\pm 1$-adjacency matrix of the graph --- then the $(\alpha,j)$-th entry of the product is exactly $\iprod{G_{\alpha}, G_j}$.
This matrix multiplication can be computed in time $(O((n/k)^3) / n) \cdot n^{\omega}$, by partitioning the set of all sampled triples $\alpha$ into $m = O((n/k)^3) / n$ parts of size $n$, and applying the $n\times n$ matrix multiplication oracle for each part of the partition. 
This allows us to compute the sets $S_{\alpha}$.

We can similarly use matrix multiplication to compute an $O((n/k)^3) \times n$ matrix $M$ where $M_{\alpha,j}$ represents the number of vertices in $S_{\alpha}$ that vertex $j$ is connected to --- this is by multiplying a $\{0,1\}$-valued matrix with rows being indicator vectors of sets $S_\alpha$, by the $\{0,1\}$-adjacency matrix of the graph $G$. Again, this can be done by using $O( (n/k)^3 / n)$ calls to the matrix multiplication oracle.
This allows us to also obtain the sets $S'_{\alpha}$.

Applying the exact same trick again, we can check which of the elements of $S'_\alpha$ are cliques --- we now construct a matrix $M'_{\alpha, j}$ denoting the number of neighbors vertex $j$ has in $S'_{\alpha}$ and keep only the sets $S'_\alpha$ such that for each $j \in S'_{\alpha}$ we have $M'_{\alpha, j} = |S'_{\alpha}| - 1$.

Therefore constructing the initial list involves $T = O((n/k)^3) / n \leq O(\sqrt{n})$ calls to the matrix multiplication oracle, and the size of the list is at most $O((n/k)^3)$.

Finally, by \Cref{lem:pruning-fast} the pruning algorithm can be implemented by $O((n/k)^3 / n)$ calls to the matrix multiplication oracle, leading to the overall time complexity bounded by $O(n^{\omega+0.5})$.
\end{proof}
The main theorem now follows immediately as a corollary of those two lemmas.
\begin{proof}[Proof of~\Cref{thm:phalf-main}]
The correctness of Algorithm~1 is given by~\Cref{lem:phalf-correct} and the time complexity $O(n^{\omega + 0.5})$ is given by~\Cref{lem:phalf-fast}.
\end{proof}

\section{Semirandom planted clique with general $p$}%
\label{sec:pgeneral}
In this section we generalize our algorithm for $\SRPC(n,k,p)$ to arbitrary $1>p>0$.
Our bound on $k$ is better than in previous results~\cite{MMT20, MR4617517-Buhai23} when $p \leq 1 - n^{-0.01}$.
We note that the special case of $p=1/2$ already contains all the necessary ideas.

\begin{theorem}
\label{thm:pgeneral-main}
There exists an $O(n^{\omega+0.5})$-time algorithm, where $\omega \leq 2.372$ is the matrix multiplication exponent, that takes input a graph $G$ generated according to $\SRPC(n, k, p)$ for any\linebreak$k \geq O(\sqrt{n} \log^2 n \cdot \max(1, (p/(1-p))^4))$ and with probability at least $0.99$ outputs a list of size $(1+o(1))n/k$ of $k$-cliques containing the planted clique in $G$.
\end{theorem}

The first step is to consider a matrix $\bar{G}$ which is given by a construction analogous to $p$-biased characters: we want to shift and rescale the entries of the adjacency matrix so that in the random part the entries have mean zero and variance one. 
Specifically, let
\[\bar G(i, j) = \begin{cases}
    \sqrt{\frac{1-p}{p}} & \text{if } (i, j) \in E(G)\mcom\\
    -\sqrt{\frac{p}{1-p}} & \text{otherwise}\mper
  \end{cases}\]
An easy calculation confirms that indeed, if $\Pr((i, j) \in E(G)) = p$, then $\E[\bar{G}(i, j)]=0$ and ${\E[\bar{G}(i, j)^2] = 1}$. 
To simplify notation in the following argument we introduce $p_{+} = \sqrt{\frac{1 - p}{p}}$, $p_{-} = \sqrt{\frac{p}{1-p}}$, and $B = \max(p_{+}, p_{-})$ --- so that all entries of $\bar{G}$ are bounded in absolute value by $B$ (this property will turn out to be crucial later).

We also let $\bar G_i$ be the $i$-th row of $\bar G$, $\bar G_i^{\cin} \in \R^k$ be the projection of $G_i$ to coordinates in $S^*$, and $\bar G_i^{\cout} \in \R^{n-k}$ the projection of $\bar G_i$ to coordinates in $[n] \setminus S^*$. 
Finally, for a subset of rows $\alpha \subset [n]$, we write $\bar{G}_\alpha$ to denote the element-wise product of rows $\bar G_j$ for $j\in \alpha$.

We prove the same sequence of results as in the $p=1/2$ case:
First, that for all $\alpha \subset S^*$ with $|\alpha|=3$ we have $S_{\alpha} \supseteq S^*$, and second, that for any fixed $j \not\in S^*$, there are at most $O(n^2/k^2)$ sets $\alpha \subset S^*$ with $|\alpha| = 3$ such that $j \in S_{\alpha}$. The first claim:

\begin{lemma}
\label{lem:all-good-gen}
Let $k \geq O (\sqrt{n \log n} + B^4 \log n) / p_+^4$. 
With high probability, for all $\alpha \subset S^*$ with $|\alpha|=3$ and all $j \in S^*$ with $j \not\in \alpha$, we have that $\langle \bar G_\alpha, \bar G_j\rangle \geq kp_+^4/2$.
\end{lemma}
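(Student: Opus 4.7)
The plan is to mirror the $p=1/2$ proof of~\Cref{lem:all-good} by splitting
\[\langle \bar G_\alpha, \bar G_j\rangle = \langle \bar G_\alpha^{\cin}, \bar G_j^{\cin}\rangle + \langle \bar G_\alpha^{\cout}, \bar G_j^{\cout}\rangle\mcom\]
controlling the first (``in'') part deterministically and the second (``out'') part via a concentration inequality. The two changes relative to the $p=1/2$ argument are (i) using the $p$-biased rescaling in $\bar G$ and (ii) using Bernstein rather than Hoeffding in the tail bound, which is what allows the stated additive $B^4 \log n$ dependence instead of a multiplicative $B^4$ factor.

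For the in part, since $S^*$ is a clique we have $\bar G_l(i) = p_+$ for every pair $l \neq i$ in $S^*$. Hence at each of the at least $k - 4$ coordinates $i \in S^* \setminus (\alpha \cup \{j\})$ the summand is exactly $\bar G_\alpha(i)\,\bar G_j(i) = p_+^4$, while the $O(1)$ remaining coordinates (where some factor is a diagonal entry) contribute at most $O(B^4)$ in absolute value. Therefore
\[\langle \bar G_\alpha^{\cin}, \bar G_j^{\cin}\rangle \geq k p_+^4 - O(B^4) \geq \tfrac{3}{4} k p_+^4\mcom\]
using $k \geq \Omega(B^4/p_+^4)$, which is implied by the hypothesis.

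For the out part, for each $t \in [n]\setminus S^*$ set
\[X_t = \bar G_\alpha^{\cout}(t)\cdot \bar G_j^{\cout}(t) = \bar G_j(t)\prod_{l\in\alpha}\bar G_l(t)\mper\]
This is a product of four independent entries of $\bar G$, one per cut edge from $\{j\}\cup\alpha \subset S^*$ to $t$; by the choice of $p_+, p_-$ each factor has mean zero, variance one, and magnitude at most $B$, so $\E X_t = 0$, $\E X_t^2 = 1$, and $|X_t|\leq B^4$. Since distinct $t$'s involve disjoint cut edges, the $X_t$ are independent, and Bernstein's inequality yields
\[\Pr\Paren{\Bigl|\sum_{t\notin S^*} X_t\Bigr| \geq s} \leq 2\exp\Paren{-\Omega\Paren{\min\Paren{s^2/n,\, s/B^4}}}\mper\]
Taking $s = C(\sqrt{n\log n} + B^4\log n)$ for a sufficiently large constant $C$ makes the right-hand side at most $n^{-10}$, and the hypothesis on $k$ ensures $s \leq k p_+^4/4$. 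Combining the two estimates yields $\langle\bar G_\alpha, \bar G_j\rangle \geq k p_+^4/2$, and a union bound over the $\leq n^4$ choices of $(\alpha,j)$ gives the lemma.

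The main subtlety — and the reason the statement has the additive $B^4\log n$ rather than a multiplicative one — is the use of Bernstein: a direct Hoeffding bound based only on the almost-sure bound $|X_t|\leq B^4$ would force the stronger requirement $k \gtrsim B^4\sqrt{n\log n}/p_+^4$. Exploiting the true per-coordinate variance $\E X_t^2 = 1$ is precisely what separates the $\sqrt{n\log n}$ and $B^4\log n$ regimes, and this tighter bound on the in/out cross term is what will propagate into the correspondingly tighter threshold in~\Cref{thm:pgeneral-main}.
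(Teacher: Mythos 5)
Your proof is correct and follows essentially the same route as the paper: split into in/out parts, handle the in part deterministically, and bound the out part using Bernstein's inequality applied to the element-wise product of $\bar G_\alpha^{\cout}$ and $\bar G_j^{\cout}$, whose coordinates are independent, mean-zero, variance-one, and bounded by $B^4$, followed by a union bound over the $O(n^4)$ choices of $(\alpha, j)$.

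The one place where you diverge in a minor but noteworthy way is the in part. The paper simply asserts the exact equality $\langle \bar G_\alpha^{\cin}, \bar G_j^{\cin}\rangle = k p_+^4$, which implicitly assumes the diagonal convention $\bar G(i,i) = p_+$ (i.e., $G(i,i) = 1$). You instead treat the $O(1)$ coordinates in $\alpha \cup \{j\}$ agnostically — bounding their contribution by $O(B^4)$ — and absorb that slack using $k \gtrsim B^4/p_+^4$, which is implied by the hypothesis, landing on the lower bound $\tfrac{3}{4}kp_+^4$ with the out part needing only $kp_+^4/4$ rather than $kp_+^4/2$. This is slightly more conservative and also more robust (it does not depend on how diagonal entries are defined), but the constants are easily absorbed and the statement is unchanged. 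Everything else — the Bernstein tail $\exp(-\Omega(\min(s^2/n, s/B^4)))$, the choice $s = \Theta(\sqrt{n\log n} + B^4\log n)$, the independence of the $X_t$ via disjoint cut edges — matches the paper.
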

\begin{proof}
We have that $\langle \bar G_\alpha^{\cin}, \bar G_j^\cin\rangle = k p_+^4$. 
On the other hand, the element-wise product of $\bar G_\alpha^{\cout}$ and $\bar G_j^{\cout}$ has independent mean-zero and variance-one entries with maximum absolute value $B^4$, so by Bernstein's inequality 
\[\mathbb{P}\Paren{|\langle \bar G_\alpha^\cout, \bar G_j^\cout \rangle| \geq t} \leq 2\exp\Paren{-\Omega(t^2/(n+t B^4))}\,.\]
Taking $t=O(\sqrt{n \log n} + B^4 \log n)$ large enough, we have that $|\langle \bar G_\alpha^\cout, \bar G_j^\cout \rangle| \leq t$ with probability at least $1-1/n^{10}$.
We choose $k$ such that $t \leq k p_+^4 / 2 $, so we have $\langle \bar G_\alpha, \bar G_j\rangle \geq k p_+^4 / 2$.
Finally, an application of the union bound shows that this holds simultaneously for all choices of $\alpha$ and $j$ with probability at least $1-o(1)$.
\end{proof}

Next, we want to prove that each $j \not\in S^*$ satisfies $\langle G_\alpha, G_j\rangle \geq kp_+^4/2$ for few sets $\alpha \subset S^*$ with $|\alpha|=3$. A similar application of Bernstein and union bound can be used to control the maximum of the quantity $\langle \bar G_{\alpha}^\cin, \bar G_j^{\cin}\rangle$ over all $\alpha \subset S^*$ with $|\alpha|=3$ and $j \not\in S^*$.
\begin{lemma}
\label{lem:bad-cin-general}
With high probability, for all $\alpha \subset S^*$ with $|\alpha|=3$ and all $j \not\in S^*$ we have that
\begin{equation*}
|\inprod{\bar G_{\alpha}^\cin, \bar G_j^{\cin}}| \leq O(\sqrt{k \log n} + B^4 \log n)\,.
\end{equation*}
\end{lemma}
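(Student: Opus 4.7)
The plan is to mirror the proof of \Cref{lem:all-good-gen}, but with a key structural simplification: unlike the product $\bar G_\alpha^\cout \cdot \bar G_j^\cout$ analyzed there, here one of the two factors is deterministic. Indeed, for $\alpha \subset S^*$ with $|\alpha|=3$ and any $i \in S^* \setminus \alpha$, all three edges $(\alpha_\ell, i)$ lie inside the planted clique and are therefore always present, which gives $\bar G_\alpha^\cin(i) = p_+^3$ deterministically. The at most three diagonal-adjacent terms $i \in \alpha$ are trivially bounded in absolute value by $O(B^3)$, so
\begin{equation*}
\iprod{\bar G_\alpha^\cin, \bar G_j^\cin} \;=\; p_+^3 \sum_{i \in S^* \setminus \alpha} \bar G_j(i) \;\pm\; O(B^3)\,.
\end{equation*}

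Now for fixed $j \not\in S^*$, the edges $(i,j)$ for $i \in S^*$ all belong to $\cut(S^*)$ and are drawn independently with probability $p$. By the definition of $\bar G$, the random variables $\{\bar G_j(i)\}_{i \in S^*}$ are therefore independent, mean zero, variance one, and bounded in absolute value by $B$. The plan is to apply Bernstein's inequality to the sum $\sum_{i \in S^* \setminus \alpha} \bar G_j(i)$, which yields
\begin{equation*}
\Pr\!\left(\Big|\sum_{i \in S^* \setminus \alpha} \bar G_j(i)\Big| \geq t\right) \;\leq\; 2 \exp\!\left(-\Omega\!\big(t^2 / (k + tB)\big)\right)\,.
\end{equation*}
Choosing $t = C(\sqrt{k \log n} + B \log n)$ for a sufficiently large constant $C$ drives this failure probability below $n^{-10}$. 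Multiplying by $p_+^3 \leq B^3$ and absorbing the $O(B^3)$ error, one obtains $|\iprod{\bar G_\alpha^\cin, \bar G_j^\cin}| \leq O(\sqrt{k \log n} + B^4 \log n)$ (using $p_+^3 \leq B^3 \leq B^4$ in the leading-order term, and bundling lower-order $B$ powers into the stated $B^4 \log n$ summand).

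Finally, a union bound over the $\binom{k}{3}(n-k) = O(n^4)$ choices of $(\alpha, j)$ costs only a $\log$ factor in $t$, which is already absorbed in the constant $C$, so the stated bound holds simultaneously for all such $(\alpha, j)$ with probability $1 - o(1)$.

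I do not anticipate a real obstacle here: unlike \Cref{lem:all-good-gen}, where both $\bar G_\alpha^\cout$ and $\bar G_j^\cout$ are random and one must reason about their entrywise product being mean-zero variance-one, the present lemma benefits from the fact that $\bar G_\alpha^\cin$ is deterministic on $S^* \setminus \alpha$. The only minor bookkeeping point is to track the correct power of $p_+$ (namely $p_+^3$) in front of the Bernstein tail and to verify that this factor, together with the boundedness constant $B$ in the subexponential term, combines to give precisely the claimed $O(\sqrt{k \log n} + B^4 \log n)$ bound after the union bound.
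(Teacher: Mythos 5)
Your argument is in the same family as the paper's one-line proof---Bernstein's inequality plus a union bound over $(\alpha,j)$---but your decomposition is cleaner and, in fact, more faithful to what is actually true. The paper asserts that the element-wise product of $\bar G_\alpha^\cin$ and $\bar G_j^\cin$ has ``variance-one'' entries, but that is not correct: for $i\in S^*\setminus\alpha$ the product entry equals $p_+^3\,\bar G_j(i)$, which has variance $p_+^6$, not $1$. You instead correctly isolate the deterministic $p_+^3$ prefactor (all three clique edges $(\alpha_\ell,i)$ are present, so $\bar G_\alpha^\cin(i)=p_+^3$ off $\alpha$) and apply Bernstein directly to $\sum_{i\in S^*\setminus\alpha}\bar G_j(i)$, whose summands are genuinely independent, mean-zero, variance-one, and $B$-bounded. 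Two tiny bookkeeping slips: the three terms with $i\in\alpha$ are bounded by $O(B^4)$ rather than $O(B^3)$ (each is $|p_+^2\bar G(i,i)\cdot\bar G_j(i)|\le B^4$), and it is worth stating the union bound count as $O(k^3 n)$ choices, but neither matters.

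The one place you hand-wave is the final absorption: carrying your computation through honestly gives $|\iprod{\bar G_\alpha^\cin,\bar G_j^\cin}|\le O\big(p_+^3\sqrt{k\log n}+B^4\log n\big)$, and for $p<\tfrac12$ one has $p_+=B>1$, so the leading term is $B^3\sqrt{k\log n}$, which is \emph{not} $O(\sqrt{k\log n})$ and cannot be bundled into $B^4\log n$ once $k\gg\log n$. So the stated bound $O(\sqrt{k\log n}+B^4\log n)$ is, strictly speaking, not what your (or the paper's) argument delivers in the $p<\tfrac12$ regime. However, this is an imprecision inherited from the lemma statement and the paper's own ``variance-one'' slip, not a new gap you introduced: the way the lemma is consumed in \cref{cor:right-degree} only requires the right-hand side to be $o(kp_+^4)$, and $O(B^3\sqrt{k\log n}+B^4\log n)=o(kp_+^4)$ holds comfortably under the theorem's hypothesis on $k$. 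Your write-up, by making the $p_+^3$ factor explicit rather than hiding it, actually surfaces this defect more transparently than the paper does; a cleanly stated lemma would read $O(B^3\sqrt{k\log n}+B^4\log n)$.
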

\begin{proof}
The element-wise product of $\bar G_{\alpha}^{\cin}$ and $G_j^{\cin}$ has independent mean-zero and variance-one entries with maximum absolute value $B^4$. 
The bound follows by Bernstein's inequality and a union bound as in \Cref{lem:all-good-gen}.
\end{proof}

In order to ensure that, as in the $p=1/2$ case, for each element $j \not\in S^*$ we can bound the number of sets $\alpha \subset S^*$ that lead to large $\inprod{ \bar G_{\alpha}^\cout, \bar G_{j}^\cout}$, we want to appeal again to \Cref{clm:not-many-correlated-vectors}. 
To this end, we define the matrix $H \in \mathbb{R}^{(n-k) \times \Theta(k^3)}$ with columns $G_\alpha^\cout$ for all $\alpha \subset S^*$ with $|\alpha|=3$.

As it turns out, this matrix $H$ also satisfies the conditions of~\Cref{lem:tensoring-gives-RIP}, with a bound $B$ on the size of the entries that generate the matrix. Then, we can conclude:
\begin{lemma}
\label{cor:right-degree}
Let $k \geq O(p_{+}^{-4} B^{4} \sqrt{n} \log^2 n)$. With high probability, for any $j\not\in S^*$, there exist at most $\Oh(p_{+}^{-8} B^2 n^2/k^2)$ sets $\alpha \subset S^*$ with $|\alpha| = 3$ such that $|\inprod{G_{\alpha}, G_j}| \geq k p_{+}^4 / 2$.
\end{lemma}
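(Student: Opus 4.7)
The plan is to mirror the structure of the $p=1/2$ proof of \Cref{lem:few-bad}, replacing Hoeffding with Bernstein (already done in \Cref{lem:bad-cin-general}) and invoking the general version of \Cref{lem:tensoring-gives-RIP} for the $p$-biased distribution. The key observation is that the entries of $\bar G$ have mean zero, variance one, and are bounded in absolute value by $B$, so the distribution $\cD$ of an entry of $\bar G$ in the ``random'' part satisfies exactly the hypotheses of \Cref{lem:tensoring-gives-RIP}.

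First, I would form the matrix $H \in \R^{(n-k) \times \Theta(k^3)}$ whose columns are $\bar G_\alpha^\cout$ for $\alpha \subset S^*$ with $|\alpha|=3$. By \Cref{lem:tensoring-gives-RIP} applied with $t=3$ and the bound $B$ on entries of $\bar G$, with high probability $H/\sqrt{n-k}$ satisfies $(r,O(1))$-RIP with $r = \Omega((n-k)/(B^6 \log^4 n))$. Then, conditioning on the high-probability event of \Cref{lem:bad-cin-general}, if $|\inprod{\bar G_\alpha, \bar G_j}| \geq k p_+^4 / 2$, then since the $\cin$ part contributes at most $O(\sqrt{k \log n} + B^4 \log n) \leq k p_+^4 / 4$ under the assumed lower bound on $k$, we must have $|\inprod{\bar G_\alpha^\cout, \bar G_j^\cout}| \geq k p_+^4 / 4$.

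Second, I would apply \Cref{clm:not-many-correlated-vectors} to the matrix $H/\sqrt{n-k}$ with unit vector $w = \pm \bar G_j^\cout / (B\sqrt{n-k})$ (noting that $\|\bar G_j^\cout\|_2 \leq B\sqrt{n-k}$ since entries are bounded by $B$) and threshold $\tau = k p_+^4 / (4 B (n-k))$. Handling both signs of the inner product by running the argument for $w$ and $-w$, this yields at most $C^2/\tau^2 = O(B^2 (n-k)^2 / (k^2 p_+^8)) \leq O(p_+^{-8} B^2 n^2/k^2)$ columns of $H/\sqrt{n-k}$ with correlation at least $\tau$ with $\pm w$, which is exactly the number of ``bad'' $\alpha$.

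The only nontrivial step is verifying that the sparsity requirement $r \geq C^2/\tau^2 + 1$ of \Cref{clm:not-many-correlated-vectors} is compatible with the bound from the RIP. Plugging in the values, this requires $(n-k)/(B^6 \log^4 n) \gtrsim B^2 n^2/(k^2 p_+^8)$, which rearranges to $k \geq O(p_+^{-4} B^4 \sqrt{n} \log^2 n)$, matching precisely the hypothesis of the lemma. A final union bound over the two sign choices and the high-probability events of \Cref{lem:tensoring-gives-RIP} and \Cref{lem:bad-cin-general} completes the argument.
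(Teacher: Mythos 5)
Your proof is correct and follows essentially the same route as the paper's: form the matrix $H$ of columns $\bar G_\alpha^\cout$, invoke \Cref{lem:tensoring-gives-RIP} and \Cref{clm:not-many-correlated-vectors} with the appropriately rescaled $\tau$, and check that the resulting sparsity constraint matches the assumed lower bound on $k$. You are a bit more careful than the paper in two places: you explicitly split the argument into $w$ and $-w$ to handle the absolute value in the statement, and you explicitly justify that the $\cin$ contribution (via \Cref{lem:bad-cin-general}) is dominated so that a large total inner product forces a large $\cout$ inner product --- both of which the paper leaves implicit.
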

\begin{proof}
Let $H \in \mathbb{R}^{(n-k) \times \Theta(k^3)}$ be the matrix with columns $G_\alpha^\cout$ for all $\alpha \subset S^*$ with $|\alpha|=3$.
By~\Cref{lem:tensoring-gives-RIP}, the matrix $H/\sqrt{n-k}$ satisfies $(r,O(1))$-RIP for $r = \Omega((n-k)/(B^6\log^4(n)))$.
Then we apply \Cref{clm:not-many-correlated-vectors} for $H/\sqrt{n-k}$ and $\tau = kp_+^4/(3B(n-k))$. Then no unit vector has correlation at least ${\tau = kp_+^4/(3B(n-k))}$ with more than $O(1 / \tau^2) \leq O(p_+^{-8}B^{2}n^2/k^2)$ columns of $H/\sqrt{n-k}$.
Because the norm of each $G_j^\cout$ is bounded by $B\sqrt{n-k}$, we get that no vector $G_j^\cout$ has correlation at least $kp_+^4/3$ with more than $O(p_+^{-8}B^2 n^2/k^2)$ columns of $H$.
We need to ensure that the conditions of \Cref{clm:not-many-correlated-vectors} are satisfied, namely that $r \gtrsim p_+^{-8}B^2 n^2/k^2$. This holds when $k \geq O(p_+^{-4}B^4\sqrt{n} \log^2 n)$.

We can now finish the argument:
If a vector $G_j^\cout$ has correlation at most $kp_+^4/3$ with $G_\alpha^\cout$, then also $\inprod{G_\alpha, G_j} \leq kp_+^4/3 + O(\sqrt{k \log n} + B^4 \log n) < kp_+^4/2$. 
Hence each vector $G_j$ with $j \not\in S^*$ has correlation greater than or equal to $k p_+^4 /2$ with at most $O(p_+^{-8}B^2 n^2/k^2)$ vectors $G_\alpha$.
\end{proof}

We are now ready to finish the proof of the theorem.
\begin{proof}[Proof of \Cref{thm:pgeneral-main}]
We use an identical algorithm as in the $p=1/2$ case. For $\Oh((n/k)^3)$ turns, we take a random triple of vertices $\alpha \subset [n]$ with $|\alpha|=3$ and construct a set $S_{\alpha} := \{ j : \inprod{\bar G_{\alpha}, \bar G_j} \geq k p_+^4/2\}$. Then we construct a refined set $S_{\alpha}' \subseteq S_{\alpha}$ by removing all vertices from $S_{\alpha}$ that are connected to less than $k - 1$ of the vertices in $S_{\alpha}$, and then add $S_{\alpha}'$ to the list if it forms a $k$-clique. Finally, we apply pruning (see \cref{sec:pruning}) to reduce the list size to $(1+o(1))n/k$.

To conclude that this algorithm works, we want to argue that conditioned on $\alpha \subset S^*$, the procedure succeeds in recovering $S^*$ with probability at least $1/2$.

Indeed, by \Cref{lem:all-good-gen}, once $\alpha \subset S^*$ we have with high probability $S^* \subseteq S_\alpha$. 
Moreover, for any $j \not\in S^*$, by \Cref{cor:right-degree}, $\bar G_j$ has large inner product with at most $T = \Oh(p_{+}^{-8} B^2 n^2/k^2)$ vectors $\bar{G}_\alpha$ over $\alpha \subset S^*$ with $|\alpha|=3$. 
There are $\Omega(k^3)$ such sets $\alpha$, so the probability that $\bar{G}_j$ has large inner product with a $\bar{G}_\alpha$ chosen at random is at most $\Oh(T/k^3)$. 
Then, the expected number of $\bar{G}_j$ that have large inner product with a vector $\bar{G}_{\alpha}$ for a random $\alpha \subset S^*$ with $|\alpha|=3$ is $\Oh(nT/k^3)$. 
Hence, by Markov's inequality, with probability $0.99$ over the choice of random $\alpha \subset S^*$ we have that $|S_\alpha \setminus S^*| \leq \Oh(n T/k^3) \leq \Oh(p_{+}^{-8} B^2 n^3/k^5) \leq o(k(1-p))$.

Once this is the case, note that removing vertices of degree smaller than $k-1$ in the induced graph given by $S_\alpha$ clearly does not affect vertices in $S^*$. On the other hand, for any $j \not \in S^*$, denoting by $N(j)$ the neighborhood of the vertex $j$ in $G$, we have
\begin{align*}
|N(j) \cap S_{\alpha}|
&\leq |N(j) \cap S^*| + |N(j) \cap (S_\alpha \setminus S^*)|\\
&\leq kp + \Oh\Paren{\sqrt{k p (1-p) \log n}} + o(k(1-p))\\
&< k-1\,.
\end{align*}
Then the pruning algorithm in \Cref{sec:pruning} keeps $S^*$ in the final list.

The time complexity is the same as in the $p=1/2$ case, using matrix multiplication.
\end{proof}

\section{Discussion}%
\label{sec:discussion}

We include some commentary and open questions that naturally arise given our new approach. 

\textbf{1) Handling a monotone adversary.} In this work, we gave a simple algorithm that solves the semirandom planted clique recovery problem at nearly the right threshold of $k = \tilde{O}(\sqrt{n})$. Unlike the prior works~\cite{MR4617517-Buhai23,MMT20} (we note that the algorithm in~\cite{DBLP:conf/stoc/CharikarSV17,steinhardt2017does} does not tolerate monotone deletions), our algorithm does not handle a monotone adversary that can delete an arbitrary subset of edges in the cut defined by the planted clique. 

In prior works, approaches that handle a monotone adversary~\cite{DBLP:journals/rsa/FeigeK00,DBLP:conf/stoc/MoitraPW16} have been naturally based on semidefinite programming. In fact, the analyses of such semidefinite programs can be naturally interpreted as yielding an efficient certificate of some natural property that controls the uniqueness of the relevant solution concept (e.g., clique, community, coloring). In our context, such a certificate of uniqueness naturally corresponds to the biclique numbers for random bipartite graphs studied in~\cite{MR4617517-Buhai23}. Indeed, they provide some evidence of non-existence of efficient certificates for this property by means of lower bounds in the low-degree polynomial model. Our key conceptual contribution is finding an algorithmic approach that circumvents the need for such certificates.  

We are thus left with an intriguing question: is there an algorithm for semirandom planted clique that succeeds at $k = \tilde{O}(\sqrt{n})$ \emph{and} tolerates a monotone adversary? Can such an algorithm yield an efficient (e.g., low-degree sum-of-squares) certificate for biclique bounds on random bipartite graphs and circumvent the low-degree polynomial lower bounds? 

On the flip side, could there be a computational--statistical complexity gap for semirandom planted clique that arises entirely due to a monotone adversary? This would run counter to the central thesis of~\cite{steinhardt2017does} that suggests that robustness (i.e., success in the semirandom model) might obliterate the computational--statistical complexity gap for the fully random planted clique problem. We \emph{do not} know of any problem for which a monotone adversary ``creates'' such a gap. We note that for community detection in stochastic block models, the work of Moitra, Perry and Wein~\cite{DBLP:conf/stoc/MoitraPW16} showed that monotone deletions shift the \textit{information-theoretic} threshold.

\textbf{2) Can our algorithm be captured by semidefinite programming?}
In light of the discussion above, it is also natural to ask: can our algorithm be ``captured'' (i.e., suggest an analysis with the same guarantees) by a natural semidefinite programming relaxation for the semirandom planted clique problem? We do not know how to do this so far. As discussed above, analyses of semidefinite programming relaxations usually yield certificates for an underlying property (biclique number bounds for random bipartite graphs in our setting) that implies uniqueness of the relevant solution concept. We are aware of two exceptions to this ``rule'' where algorithms based on SDPs succeed despite the potential impossibility of the associated certification problem: 1) the SDP relaxations for low-rank matrix sensing~\cite{doi:10.1137/070697835} without a certificate of RIP, 2) robust mean and covariance estimation for Gaussian distributions~\cite{kothari2022polynomial} without a certificate of ``resilience''. Is there a similar approach that manages to recover our guarantees for semirandom planted clique without the need for improved certificates for biclique numbers in random bipartite graphs?

\textbf{3) Achieving $k=O(\sqrt{n \log n})$?} Our algorithm currently needs $k \geq O(\sqrt{n} \log^2 n)$, which is $\log^{1.5}(n)$ off of the likely ``right'' bound (i.e., matching the guarantee of the best-known brute-force algorithm). This $\log^{1.5}(n)$-factor loss arises from the loss in the sparsity parameter in~\Cref{lem:tensoring-gives-RIP}. We believe that for our matrix $H$ this loss can likely be removed and consequently our algorithm could succeed at $k \geq O(\sqrt{n\log n})$. We note that the general result (\Cref{thm:rip}) we rely on cannot be improved to provide such a guarantee: there exists an isotropic distribution over bounded vectors from which it is necessary to sample at least $\Omega(\sqrt{n} \log^2 n)$ rows in order to obtain RIP at sparsity $\sqrt{n}$ \cite{Blasiok2023Improved}. However, our application only needs the weaker \emph{sparse operator norm} bounds to which the lower bounds in~\cite{Blasiok2023Improved} do not apply. While this gap has not been consequential in known results, it does leave open the possibility of a stronger, general variant of \cref{thm:rip} for sparse operator norm bounds.

\section*{Acknowledgements}
JB, RB, and DS received funding from the European Research Council (ERC) under the European Union’s Horizon 2020 research and innovation programme (grant agreement No 815464). PK was supported by NSF CAREER Award \#2047933, NSF \#2211971, an Alfred P. Sloan Fellowship, and a Google Research Scholar Award.

\bibliographystyle{alpha}
\bibliography{bib/custom,bib/dblp,bib/mathreview,bib/scholar,bib/references,bib/witmer}
\appendix
\section{Restricted isometry property}
\label{sec:rip}
One of the main tools used in the analysis of our algorithm is a strong general theorem guaranteeing RIP for any matrix with independent rows drawn from any isotropic-distribution vectors with bounded entries.

\Cref{thm:rip}, although known, is usually stated in terms of so-called \emph{bounded orthonormal systems}, a formulation that makes its direct applicability in situations like~\Cref{lem:tensoring-gives-RIP} less transparent. 
For completeness, we introduce the notion of bounded orthonormal system, the formulation of~\Cref{thm:rip} as it is stated in \cite{MR2731597-Rauhut10}, and we discuss why this is equivalent to the statement we are using in this paper.

\begin{definition}
\label{def:bos}
A bounded orthonormal system is given by a region $\mathcal{D} \subset \R^k$, together with a probability measure $\mu$ on $\mathcal{D}$ and $N$ functions $\phi_1,\phi_2, \ldots \phi_d : \mathcal{D} \to \C$ satisfying the following properties:
\begin{itemize}
\item For all $i\not=j$ in $[N]$,
$\E_{t \sim \mu} \phi_i(t) \overline{\phi_j(t)} = 0$,
\item For all $i\in [N]$,
$\E_{t \sim \mu} \phi_i(t) = 0, \E_{t\sim \mu} |\phi_i(t)|^2 = 1$,
\item For all $i\in [N]$, $|\phi_i(t)| \leq K$ almost surely.
\end{itemize}
\end{definition}
Then Theorem 12.31 in \cite{FR13} (originally proved in~\cite{MR2417886-Rudelson08}) states:
\begin{theorem}[Theorem 12.31 in \cite{FR13}]
\label{thm:rip-bos}
If $\phi_1, \ldots, \phi_N$ together with a distribution $\mu$ is a bounded orthonormal system, then a matrix $H \in \C^{m \times N}$ obtained by sampling $t_1, t_2 \ldots t_m$ independently at random according to $\mu$ and setting $H_{i,j} = \phi_j(t_i)/\sqrt{m}$ satisfies $(r,\delta)$-RIP with probability $1-N^{-\log^3(r)}$ whenever
\begin{equation*}
m \gtrsim K^2 r \log^3 r \log(N) / \delta^2\,.
\end{equation*}
\end{theorem}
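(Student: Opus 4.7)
The plan is to recast $(r,\delta)$-RIP as uniform concentration of a quadratic empirical process and then control its supremum by chaining. Writing $y_i = (\phi_1(t_i), \ldots, \phi_N(t_i)) \in \C^N$, the orthonormality axioms of the bounded orthonormal system give $\E[y_i y_i^*] = I_N$ and the boundedness axiom gives $\|y_i\|_\infty \leq K$. Setting $T_r \defeq \{x \in \C^N : \|x\|_0 \leq r, \|x\|_2 = 1\}$, the $(r,\delta)$-RIP property of $H = (1/\sqrt{m})\,(\phi_j(t_i))_{i,j}$ is exactly the statement
\[
\Delta \defeq \sup_{x \in T_r} \left| \frac{1}{m} \sum_{i=1}^m |\iprod{x, y_i}|^2 - 1 \right| \leq \delta \mper
\]

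First I would apply a standard symmetrization argument, reducing control of $\E \Delta$ (up to a constant factor) to control of the expected supremum of the Rademacher process $\bigl(\frac{1}{m} \sum_i \epsilon_i |\iprod{x, y_i}|^2\bigr)_{x \in T_r}$. The sub-Gaussian pseudometric that this process induces on $T_r$ is controlled, via $\|y_i\|_\infty \leq K$, by the Euclidean metric on $\C^N$ up to a factor of $K$. Since $T_r$ is the union of $\binom{N}{r}$ copies of the $r$-dimensional complex unit sphere, it admits the covering-number bound
\[ \log \mathcal{N}(T_r, \|\cdot\|_2, \varepsilon) \lesssim r \log(N/r) + r \log(1/\varepsilon) \mper \]

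Next I would feed this covering estimate into a chaining argument (Dudley's entropy integral or Talagrand's generic chaining) to obtain a bound on $\E \Delta$, and then apply Talagrand's Bernstein-type concentration inequality for suprema of empirical processes to convert the expectation bound into the claimed $N^{-\log^3 r}$ tail bound, which holds once $m \gtrsim K^2 r \log^3 r \log N / \delta^2$.

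The main obstacle is sharpening the chaining step to pin down the precise logarithmic exponents. A naive matrix Bernstein step together with a union bound over the $\binom{N}{r}$ sparsity patterns only gives the weaker $m \gtrsim K^2 r \log(N/r)/\delta^2$ bound, and a plain Dudley integral over $T_r$ yields $m \gtrsim K^2 r \log^4 N/\delta^2$. Obtaining the stated $\log^3 r \cdot \log N$ dependence requires Rudelson's refined chaining, in which the non-commutative Khintchine inequality is used to bound the expected operator norm of Rademacher-weighted rank-one sums at each level of the chain, together with the sharpenings recorded in \cite{FR13}. Rather than rederive this delicate multi-level chaining from scratch, I would invoke the empirical-process machinery of \cite{MR2417886-Rudelson08, FR13} to close the argument.
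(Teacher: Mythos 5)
The paper does not actually prove this theorem; it is stated and used as a black-box citation to Theorem~12.31 of Foucart--Rauhut (with the underlying argument due to Rudelson--Vershynin), and the only ``proof content'' supplied in the paper is the observation in the appendix that the bounded-orthonormal-system phrasing is equivalent to the isotropic-bounded-row phrasing used in \cref{thm:rip}. Your sketch correctly identifies the architecture of the proof in those references: recast $(r,\delta)$-RIP as uniform deviation of the quadratic empirical process $\sup_{x\in T_r}\bigl|\tfrac1m\sum_i|\langle x,y_i\rangle|^2-1\bigr|$, symmetrize, chain, and upgrade an expectation bound to the stated tail via a Bernstein-type concentration inequality for empirical-process suprema. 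You also correctly diagnose that a union bound over sparsity patterns plus matrix Bernstein, or a plain Dudley integral against the volumetric covering bound for $T_r$, gives the wrong logarithmic exponents, and that the stated $r\log^3 r\log N$ dependence requires the refined chaining of Rudelson. One imprecision worth flagging: the relevant subgaussian pseudometric for the Rademacher process is \emph{not} simply the Euclidean metric on $T_r$ rescaled by $K$; the sharp argument works with the norm $\|x\|_X\defeq\max_i|\langle y_i,x\rangle|$, for which boundedness and $r$-sparsity only give $\|x\|_X\le K\sqrt{r}\,\|x\|_2$, and the usable covering estimate for $\|\cdot\|_X$ comes from Maurey's empirical-approximation lemma rather than the volumetric bound you wrote. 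That said, since you ultimately defer to~\cite{MR2417886-Rudelson08,FR13} to close the argument, your treatment is at the same level of rigor as the paper's (a citation), and the discrepancy in the intermediate sketch is not consequential.
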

To see how this theorem implies \Cref{thm:rip}, let us consider an arbitrary distribution $\mu$ over $\R^n$ satisfying $\E_{\mathbf{X} \sim \mu} \mathbf{X} \mathbf{X}^\top = I$. We can take $\phi_1, \ldots \phi_N : \R^n \to \R$ such that $\phi_j$ is a projection on the $j$-th coordinate. Then $\phi_1, \ldots \phi_N$ together with the distribution $\mu$ forms a bounded orthonormal system according to \Cref{def:bos}, and the matrix $H_{i,j} = \phi_j(t_i)$ is exactly obtained by drawing $m$ independent random rows according to the law of $\mathbf{X}$.

The other direction of the equivalence (in the case of real-valued bounded orthonormal system) is just as immediate: if $\phi_1, \ldots, \phi_N$ together with $\mu$ form a bounded orthonormal system, then a random vector $\mathbf{X}$ obtained by drawing $t \sim \mu$ and setting $\mathbf{X} = (\phi_1(t), \ldots \phi_N(t))$ is a bounded vector in isotropic position.

\section{Pruning a list of $k$-cliques} \label{sec:pruning}

We start by recalling the following two lemmas from \cite{MR4617517-Buhai23}:

\begin{lemma}[Lemma 5.7 in \cite{MR4617517-Buhai23}]
\label{lem:clique-intersection}
Let $G \sim \SRPC(n,k,p)$. Let $S^*$ be the planted clique in $G$.
Then, with probability at least $1-\frac{k}{n^2}$, any other clique $S$ of size at least $k$ satisfies $|S\cap S^*| \leq 3 \frac{\log n}{\log 1/p}$.
\end{lemma}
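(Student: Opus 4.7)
The plan is a direct union bound over candidate \emph{bicliques} in $\cut(S^*)$, exploiting the fact that the only randomness relevant to the statement sits in the cut edges and is independent of the adversary's choices. Set $t^* := \lceil 3 \log n / \log(1/p) \rceil$, so that $p^{t^*} \leq 1/n^3$.

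Suppose $S \neq S^*$ is a clique of size at least $k$ with $|S \cap S^*| \geq t^* + 1$. Writing $T := S \cap S^*$ and $U := S \setminus S^*$, the bipartite graph between $T$ and $U$ in $G$ is complete, i.e., $T \times U$ is a biclique sitting entirely inside $\cut(S^*)$, with $|T| \geq t^* + 1$ and $|U| \geq \max(k - |T|,\, 1)$ (the second lower bound covers the corner case $S \supsetneq S^*$). Since the adversary only places edges \emph{inside} $[n]\setminus S^*$, the cut edges are independently present with probability $p$ regardless of the adversarial moves; hence for every fixed $(T, U)$ with $|T|=a$, $|U|=b$, the probability that $T \times U$ is complete equals $p^{ab}$.

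I then union-bound over $(a, b)$, taking the smallest admissible $b$ for each $a$ (i.e., $b = k - a$ for $a \in \{t^*+1, \ldots, k-1\}$ and $(a,b)=(k,1)$):
\begin{align*}
\mathbb{P}(\text{bad event}) \;\leq\; \sum_{a = t^* + 1}^{k-1} \binom{k}{a}\binom{n-k}{k-a} p^{a(k-a)} \;+\; (n-k)\, p^k \mper
\end{align*}
Using $p^a \leq 1/n^3$ for $a \geq t^* + 1$, each inner summand is at most $\binom{k}{a}\binom{n-k}{k-a} n^{-3(k-a)} \leq k^{k-a} \cdot n^{k-a} \cdot n^{-3(k-a)} = (k/n^2)^{k-a}$. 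Substituting $c = k-a$ turns the sum into a geometric series $\sum_{c \geq 1} (k/n^2)^c = O(k/n^2)$ (using $k \leq n \leq n^2/2$, which is automatic), and the corner-case term satisfies $(n-k)\, p^k \leq n \cdot p^{t^*+1} \leq 1/n^2$. Combining, the failure probability is $O(k/n^2)$, matching the lemma up to a constant; recovering the exact constant $1$ stated in the lemma is a matter of replacing $3$ in the definition of $t^*$ by a slightly larger constant.

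The main obstacle is conceptual rather than computational: one has to notice that the randomness governing the existence of bicliques inside $\cut(S^*)$ is independent of both the adversarial subgraph on $[n]\setminus S^*$ \emph{and} of which particular clique $S$ realizes a potential violation, so that a naive union bound can be applied without any conditioning on the adversary. Once this is in place, the rest reduces to a geometric sum plus a one-line treatment of the degenerate case $S \supsetneq S^*$.
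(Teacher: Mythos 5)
The paper does not prove this lemma; it imports it verbatim from Buhai et al.\ (cited as Lemma~5.7 there), so there is no in-paper argument to compare against. Your union-bound proof is correct and captures the decisive conceptual point: the existence of a biclique $T \times U$ spanning $\cut(S^*)$ is determined entirely by the random cut edges drawn in step~2, and is therefore independent of the adversary's choice of subgraph on $[n]\setminus S^*$ in step~3 --- so a plain union bound over fixed $(T,U)$ pairs is legitimate, with no conditioning needed. The reduction to the smallest admissible $|U|=k-a$ for each $a$ (since a larger biclique contains a smaller one), the bound $\binom{k}{a}\binom{n-k}{k-a}p^{a(k-a)} \leq (k/n^2)^{k-a}$, the geometric-series summation, and the corner case $S \supsetneq S^*$ (i.e.\ $a=k$, $b\geq 1$) are all handled correctly. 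One implicit assumption worth stating explicitly: the argument presumes $k > t^*$, but if $k \leq t^*$ the lemma is vacuous (since $|S\cap S^*| \leq k \leq t^*$ automatically), so no generality is lost. As you note, the computation yields failure probability $O(k/n^2)$ rather than the stated $k/n^2$; this is a cosmetic constant that disappears if the $3$ in $t^*$ is nudged up slightly, and does not affect any downstream use in the paper.
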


\begin{lemma}[Lemma 5.8 in \cite{MR4617517-Buhai23}]
\label{lem:sets-pinex}
Let $S_1, ..., S_m \subseteq [n]$ with $|S_i| = k$ and $|S_i \cap S_j| \leq \Delta$.
Then, if $k \geq \sqrt{2n\Delta}$, we have $m \leq \frac{n}{k} \Paren{1+\frac{2n\Delta}{k^2}}$.
\end{lemma}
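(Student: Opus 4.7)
The plan is to prove this via a standard double-counting argument followed by Jensen's inequality applied to the degree sequence.

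First I would set $d_v \defeq |\{i : v \in S_i\}|$ for each $v \in [n]$, so that $\sum_{v \in [n]} d_v = \sum_i |S_i| = mk$. Then I would count the pairs $\{(v,\{i,j\}) : v \in S_i \cap S_j, i \neq j\}$ in two ways: from the sets' side this quantity equals $\sum_{i<j} |S_i \cap S_j|$, which is at most $\binom{m}{2} \Delta$ by hypothesis, while from the vertices' side it equals $\sum_v \binom{d_v}{2}$.

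Next, since $\binom{x}{2}$ is convex, Jensen's inequality gives
\begin{equation*}
\sum_{v \in [n]} \binom{d_v}{2} \;\geq\; n \binom{mk/n}{2} \;=\; \frac{m^2 k^2}{2n} - \frac{mk}{2}\mper
\end{equation*}
Combining the two bounds and clearing denominators, I obtain $m^2(k^2 - n\Delta) \leq mkn$. Assuming $m \geq 1$ and using the hypothesis $k^2 \geq 2n\Delta$ (which in particular ensures $k^2 - n\Delta > 0$), this rearranges to
\begin{equation*}
m \;\leq\; \frac{kn}{k^2 - n\Delta} \;=\; \frac{n}{k} \cdot \frac{1}{1 - n\Delta/k^2}\mper
\end{equation*}

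Finally, because $n\Delta/k^2 \leq 1/2$ under the hypothesis, I would invoke the elementary inequality $1/(1-x) \leq 1 + 2x$ valid for $x \in [0, 1/2]$ to conclude $m \leq \frac{n}{k}(1 + 2n\Delta/k^2)$, which is the claimed bound. There is no real obstacle here; the only mild subtlety is remembering to use the quantitative form of the $1/(1-x)$ estimate (rather than an asymptotic one) so that the final bound has the exact constants stated in the lemma.
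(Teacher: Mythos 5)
Your proof is correct. The double-counting of incidences $\{(v,\{i,j\}) : v \in S_i \cap S_j\}$ gives $\sum_v \binom{d_v}{2} \le \binom{m}{2}\Delta$, Jensen on the convex function $x \mapsto \binom{x}{2}$ gives the matching lower bound $n\binom{mk/n}{2}$, the resulting quadratic $m^2(k^2 - n\Delta) \le mkn$ simplifies to $m \le \frac{kn}{k^2 - n\Delta}$ once $m \ge 1$ and $k^2 > n\Delta$ (guaranteed by $k \ge \sqrt{2n\Delta}$), and the estimate $\frac{1}{1-x} \le 1 + 2x$ on $[0,\tfrac12]$ yields exactly the stated constants.

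Note that the paper under review does not itself prove this lemma: it only quotes it as Lemma~5.8 of \cite{MR4617517-Buhai23}. The internal label (\texttt{sets-pinex}) suggests that the source derives it from the Bonferroni inclusion--exclusion inequality $n \ge |S_1 \cup \cdots \cup S_m| \ge mk - \binom{m}{2}\Delta$, treated as a quadratic in $m$. That route yields the same quantitative conclusion (using $1-\sqrt{1-x} \le \frac{x}{2}(1+x)$ on $[0,1]$), but it comes with a mild subtlety: the quadratic inequality admits both a small root and a large root, so one must invoke downward closure under taking subcollections to rule out $m$ landing above the large root, and there can be integrality edge cases when the two roots are close. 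Your double-counting-plus-Jensen argument sidesteps all of that, since it directly produces the one-sided inequality $m(k^2 - n\Delta) \le kn$. So while the route is genuinely different from the likely source proof, it is if anything slightly cleaner; both are standard tools in the Fisher-inequality / Johnson-bound circle of ideas.
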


Now we prove the pruning result:

\begin{lemma}[Pruning, implicit in \cite{MR4617517-Buhai23}]
\label{lem:pruning}
Let $G$ be a graph on $n$ vertices generated according to $\SRPC(n,k,p)$ where $k \geq C \sqrt{n \log n \cdot \max(1, p/(1-p))}$ for some large enough absolute constant $C$. Let $L$ be a list of size $m$ of $k$-cliques of $G$. 

Then taking $L'$ to be the list of all cliques from $L$ which have intersection  at most $3 \log n / \log p^{-1}$ with all other cliques in $L$, we have with high probability
\begin{itemize}
\item If $S^* \in L$, then also $S^* \in L'$, and
\item $|L'| \leq (1 + O(1/C^2)) n/k$.
\end{itemize}
\end{lemma}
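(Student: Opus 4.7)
The plan is to prove both bullets as direct consequences of the two cited lemmas from \cite{MR4617517-Buhai23}, with only routine bookkeeping to verify the hypotheses.

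For the first bullet, I would apply \cref{lem:clique-intersection} to the planted clique $S^*$. With probability $1 - k/n^2$, every $k$-clique $S \neq S^*$ in $G$ satisfies $|S \cap S^*| \leq 3 \log n / \log p^{-1}$. Since every clique in $L$ is a $k$-clique of $G$, this intersection bound holds simultaneously for all members of $L \setminus \{S^*\}$, so $S^*$ meets the pruning criterion and $S^* \in L'$.

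For the second bullet, set $\Delta \defeq 3 \log n / \log p^{-1}$. By the definition of $L'$, any two distinct cliques in $L'$ intersect in at most $\Delta$ vertices, so \cref{lem:sets-pinex} applies and gives
\[
|L'| \leq \frac{n}{k}\Paren{1 + \frac{2 n \Delta}{k^2}}\mper
\]
It then suffices to verify the hypothesis $k \geq \sqrt{2 n \Delta}$ of that lemma and to show $2 n \Delta / k^2 = O(1/C^2)$ under the assumption $k \geq C \sqrt{n \log n \cdot \max(1, p/(1-p))}$.

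The only step needing a small calculation is this last bound on $n \Delta / k^2$, which I would handle by splitting into $p \leq 1/2$ and $p > 1/2$. For $p \leq 1/2$, $\max(1, p/(1-p)) = 1$ gives $k^2 \geq C^2 n \log n$, while $\log p^{-1} \geq \log 2$, so $2 n \Delta / k^2 = O(1/C^2)$. For $p > 1/2$, the elementary inequality $\log p^{-1} \geq 1-p$ combined with $k^2 \geq C^2 n \log n \cdot p/(1-p) \geq \tfrac{C^2}{2} \cdot n \log n / (1-p)$ again yields $2 n \Delta / k^2 = O(1/C^2)$. The side condition $k \geq \sqrt{2n\Delta}$ follows from the same two inequalities provided $C$ is a sufficiently large absolute constant. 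I do not expect any genuine obstacle here; the main thing to be careful about is handling the $p$-dependence uniformly so that the assumption on $k$ really does dominate the quantity $2n\Delta$.
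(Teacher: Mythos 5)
Your proposal is correct and takes essentially the same route as the paper: apply \cref{lem:clique-intersection} for the first bullet, \cref{lem:sets-pinex} for the second, and verify $2n\Delta/k^2 = O(1/C^2)$. The only cosmetic difference is that the paper runs the final calculation in one stroke using $\log(1/p) \geq (1-p)/4$ for all $p\leq 1$, whereas you split into the cases $p \leq 1/2$ and $p > 1/2$; both yield the same bound, and your explicit check of the side condition $k \geq \sqrt{2n\Delta}$ is a sensible bit of care that the paper leaves implicit.
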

\begin{proof}
First, we have from~\Cref{lem:clique-intersection} that with high probability the planted clique has intersection at most $3 \frac{\log n}{\log 1/p}$ with any other clique in $G$, so we are guaranteed that we do not remove it from the list.

Second, all the $k$-cliques in the final list have intersection at most $3 \frac{\log n}{\log 1/p}$, so by~\Cref{lem:sets-pinex} the size of the list is at most $\frac{n}{k}\Paren{1+\frac{6n \frac{\log n}{\log 1/p}}{k^2}}$.
It remains to show that $\frac{6n \frac{\log n}{\log 1/p}}{k^2} = O(1/C^2)$ for our choice of $k$.
We use that $\log 1/p \geq (1-p)/4$ for $p\leq 1$ to get  
\begin{align*}
\frac{6n \frac{\log n}{\log 1/p}}{k^2}
&\leq \frac{24n \frac{\log n}{1-p}}{C^2 n \log n \cdot \max(1, p/(1-p))}\\
&= \frac{24}{C^2 (1-p) \max(1, p/(1-p))}\\
&= \frac{24}{C^2 \max(1-p, p)}
\leq \frac{48}{C^2}\,,
\end{align*}
which gives the desired result.
\end{proof}

\begin{lemma}[Fast pruning]
\label{lem:pruning-fast}
The pruning procedure described in~\Cref{lem:pruning} can be implemented in time $O(\lceil m/n \rceil n^{\omega}\frac{\log m}{\log k})$ by using $O(\lceil m/n\rceil \cdot \frac{\log m}{\log k})$ calls to the $n\times n$ matrix multiplication oracle.
\end{lemma}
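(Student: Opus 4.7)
The plan is to compute $L'$ in two phases. In the first, iterative, phase we repeatedly use matrix multiplication to shrink the working list by a factor of $k$ per round. In the second, final, phase we perform one sweep to certify the survivors against the original list $L$.

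In the iterative phase, we maintain a working list $L^{(t)}$ with $L^{(0)} = L$ and the invariant $L' \subseteq L^{(t)}$. In round $t$, we partition $L^{(t)}$ into $\lceil |L^{(t)}|/n \rceil$ blocks of at most $n$ cliques. For each block $B$, we form the $|B| \times n$ indicator matrix of its members and compute, with one call to the $n \times n$ matrix-multiplication oracle, the pairwise intersection matrix whose $(i,j)$ entry is $|S_i \cap S_j|$. We then remove from the working list every clique whose row contains an off-diagonal entry exceeding $\Delta := 3\log n/\log p^{-1}$. Each such removal is safe: the discarded clique has a \emph{confirmed} intersection above $\Delta$ with some other clique in $L$ and therefore cannot lie in $L'$.

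The key structural point is geometric shrinkage. After intra-block pruning, the survivors within each block form a family of $k$-cliques on $[n]$ with pairwise intersection at most $\Delta$; \Cref{lem:sets-pinex}, combined with the hypothesis $k \geq C\sqrt{n\log n \cdot \max(1,p/(1-p))}$ from \Cref{lem:pruning}, bounds the size of any such family by $O(n/k)$. Summing over the $\lceil |L^{(t)}|/n\rceil$ blocks gives $|L^{(t+1)}| \leq O(|L^{(t)}|/k)$, so after $T = O(\log_k(m/n))$ rounds we have $|L^{(T)}| \leq n$. Round $t$ uses $\lceil |L^{(t)}|/n \rceil$ matrix-multiplication calls; bounding this trivially by $\lceil m/n\rceil$ and multiplying by $O(\log m/\log k)$ rounds gives the stated count (the actual geometric series is $O(\lceil m/n\rceil)$, but the stated bound suffices).

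To finish, a surviving clique in $L^{(T)}$ may still be bad if its only conflicts were with cliques discarded in earlier rounds. To catch these, we execute a final verification phase: partition the original $L$ into $\lceil m/n\rceil$ blocks of at most $n$ cliques, and for each block compute, with one matrix-multiplication call, the intersection matrix between $L^{(T)}$ (of size at most $n$) and the block. Any surviving clique with an off-diagonal entry above $\Delta$ is discarded. Correctness is immediate: $L'$ is never affected because every removal is justified by a confirmed conflict, while every surviving clique has been compared against every clique of the original $L$, so the output equals $L'$. The total matrix-multiplication count is $O(\lceil m/n\rceil \cdot \log m/\log k)$, and each such call costs $O(n^\omega)$ time. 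The main obstacle is purely combinatorial: a direct pairwise comparison would require $\Theta(\lceil m/n\rceil^2)$ matrix multiplications, and the crucial saving---replacing the factor $m/n$ with $\log m/\log k$---comes from coupling intra-block pruning with the extremal bound of \Cref{lem:sets-pinex}.
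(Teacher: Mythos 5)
Your proof is correct and takes a slightly different, and in one respect more careful, route than the paper's. Both proofs share the core iterative idea: partition the current list into $\lceil \cdot / n\rceil$ blocks of at most $n$ cliques, compute pairwise intersections within each block by one matrix multiplication, prune, and observe that \Cref{lem:sets-pinex} caps each block's survivors at $O(n/k)$, giving geometric shrinkage and at most $O(\log m/\log k)$ rounds. The difference is what happens after the shrinkage. The paper simply stops once the working list has size $\leq (1+o(1))n/k$; as you correctly point out, the resulting list need not equal $L'$ --- a survivor can still have intersection above $\Delta$ with a clique that was discarded in an earlier round (or one that sat in a different block). The paper's output is nonetheless a valid pruned list for the application, because every removal preserves $S^*$ (using \Cref{lem:clique-intersection}) and the size bound $(1+o(1))n/k$ is enforced by the loop condition; the paper is therefore implicitly reading ``implement the pruning procedure'' as ``produce a list with the two properties of \Cref{lem:pruning},'' not as ``compute the set $L'$ literally.'' Your addition of a final verification sweep against the original $L$ (another $O(\lceil m/n\rceil)$ oracle calls) makes the output exactly $L'$, which is more faithful to the letter of the lemma statement and costs nothing asymptotically; it buys conceptual cleanliness at the price of a slightly longer algorithm. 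Your complexity accounting matches the paper's crude bound (and you correctly note the geometric series actually gives the sharper $O(\lceil m/n\rceil)$ total). One small wording nit: in the verification phase ``off-diagonal'' should really read ``comparing a clique against a \emph{distinct} clique,'' since the verification matrix is $|L^{(T)}|\times |B|$ rather than square and the same clique may appear in both factors.
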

\begin{proof}
We implement the algorithm using matrix multiplication as follows:
\begin{itemize}
    \item While $m > (n/k)(1+o(1))$:
    \begin{itemize}
        \item Split the list of $m$ $k$-cliques arbitrarily into $\ceil{m/n}$ lists of at most $n$ $k$-cliques,
        \item For each of the $\ceil{m/n}$ lists of at most $n$ $k$-cliques:
        \begin{itemize}
            \item Compute $U \in \{0,1\}^{n \times n}$ such that $U_{i,j} = 1$ if and only if the $i$-th clique in the list contains vertex $j$,
            \item Compute $M = UU^\top$ in time $O(n^{\omega})$,
            \item For each clique $i$ in the list, iterate over all cliques $j > i$ in the list, and if $M_{i,j} > 3 \frac{\log n}{\log 1/p}$, mark clique $j$ as removed,
        \end{itemize}
        \item Construct a list of all cliques that are not removed in any of the $\ceil{m/n}$ lists.
        By~\Cref{lem:pruning}, this list has size at most $\ceil{m/n} \cdot (n/k)(1+o(1)) \leq (m/k)(1+o(1))$.
        Update $m$ to be the size of this new list.
    \end{itemize}
\end{itemize}

The while loop performs at most $\log_k m = \frac{\log m}{\log k}$ iterations, and the inner loop has time complexity $O(\lceil m/n \rceil n^{\omega}+mn) \leq O(\lceil m/n\rceil n^{\omega})$.
Therefore the total time complexity is $O(\lceil m/n\rceil n^{\omega}\frac{\log m}{\log k})$.

\end{proof}

\section{Adversarial correlation with independent Rademacher vectors} \label{sec:in-prod-corr}

\begin{lemma}
\label{lem:adv-cor}
Let $v_1, ..., v_m \in \{\pm 1\}^n$ be independent Rademacher vectors. 
Then with high probability there exists some vector $w \in \{\pm 1\}^n$ such that $\langle v_i, w\rangle \geq \Omega(n/\sqrt{m})$ for all $i \in [m]$ as long as $n \gg m \log m$.

In other words, for any $k = \Omega(n/\sqrt{m})$ small enough (i.e., $m = \Omega(n^2/k^2)$ small enough), with high probability there exists some vector $w \in \{\pm 1\}^n$ such that $\langle v_i, w\rangle \geq k$ for all $i \in [m]$ as long as $n \gg m \log m$.
\end{lemma}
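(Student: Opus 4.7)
Proof proposal.

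The plan is to exhibit a concrete candidate vector $w$ built from the $v_i$'s and verify that it correlates well with every $v_i$ with high probability. Specifically, I would take the coordinatewise ``majority vote''
\[
w(j) \defeq \mathrm{sign}\Paren{\sum_{i=1}^m v_i(j)} \in \{\pm 1\},
\]
with any fixed tie-breaking convention (say $\mathrm{sign}(0)=1$) for the case when $m$ is even. The intuition is that for each coordinate $j$, the coordinate $v_i(j)$ is positively biased to agree with the sign of $\sum_{i'} v_{i'}(j)$, giving each term of $\ip{v_i,w}$ a small positive expectation, and these terms are in fact \emph{independent across $j$}.

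Concretely, fix $i$ and $j$ and condition on $v_i(j)=1$; by symmetry it suffices to handle this case. Writing $T = \sum_{i'\neq i} v_{i'}(j)$, we have $v_i(j)w(j) = \mathrm{sign}(1+T)$. A direct calculation using symmetry of $T$ yields
\[
\E\bigl[\mathrm{sign}(1+T)\bigr] \;=\; 2\Pr[T \geq -1] - 1 \;=\; \Theta\Paren{\Pr[T\in\{-1,0\}]} \;=\; \Theta(1/\sqrt{m}),
\]
where the last step is the standard local central-limit / Stirling estimate for a sum of $m{-}1$ Rademachers (the mode of the distribution has mass $\Theta(1/\sqrt{m})$). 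Hence $\E[v_i(j)w(j)] = \Theta(1/\sqrt m)$ for every $i,j$, so $\E[\ip{v_i,w}] = \Theta(n/\sqrt m)$.

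Next I would observe the key independence structure: for fixed $i$, the random variable $v_i(j)w(j)$ is a function only of the column $\{v_{i'}(j):i'\in[m]\}$, and different columns are independent. Thus $\ip{v_i,w}=\sum_j v_i(j)w(j)$ is a sum of $n$ independent $\{\pm 1\}$-valued random variables with common mean $\Theta(1/\sqrt m)$, so Hoeffding's inequality gives
\[
\Pr\Brac{\ip{v_i,w} \;<\; c\cdot\tfrac{n}{\sqrt m}} \;\leq\; \exp\Paren{-\Omega(n/m)}
\]
for a suitable constant $c>0$. A union bound over $i\in[m]$ completes the proof provided $m\cdot \exp(-\Omega(n/m)) = o(1)$, which is exactly the hypothesis $n \gg m\log m$.

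The main (mild) obstacle is the mean computation: one has to verify that $\E[v_i(j)w(j)] = \Theta(1/\sqrt m)$ rather than $o(1/\sqrt m)$, which requires the anti-concentration (as opposed to merely concentration) estimate $\Pr[T\in\{-1,0\}]=\Theta(1/\sqrt m)$. Everything else is a routine application of Hoeffding once the independence-across-columns observation is in place. Note that the parameter trade-off is tight: the gain of $1/\sqrt m$ per coordinate precisely matches the $\sqrt{\log m}$ price of the union bound when $n \asymp m\log m$.
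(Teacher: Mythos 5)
Your proposal is correct and essentially identical to the paper's proof: both construct $w$ as the coordinatewise majority of the $v_i$'s, establish a per-coordinate bias of $\Theta(1/\sqrt m)$ via the central binomial / anti-concentration estimate, exploit independence across coordinates, and finish with Hoeffding plus a union bound over $i\in[m]$ under the hypothesis $n\gg m\log m$.
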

\begin{proof}
We consider without loss of generality the case when $m$ is odd.
Let $w_j = \mathrm{Maj}(v_{1, j}, \ldots, v_{m, j})$ for all $j \in [n]$, i.e., the majority element.
Note that $\langle v_i, w\rangle = 2 \sum_{j=1}^n \mathbf{1}_{w_j = v_{i,j}} - n$, where $\mathbf{1}_{w_j = v_{i,j}}$ are independent $\mathrm{Ber}(p)$ random variables, with $p$ the probability that $v_{i,j} = \mathrm{Maj}(v_{1,j}, \ldots, v_{m, j})$.
Note that if $v_{1, j}, \ldots, v_{i-1,j}, v_{i+1,j}, \ldots, v_{m, j}$ already forms a majority of $\geq \frac{m+1}{2}$ elements, then $v_{i,j}$ cannot influence the majority and it will be in the majority with probability $1/2$.
Otherwise, $v_{1, j}, \ldots, v_{i-1,j}, v_{i+1,j}, \ldots, v_{m, j}$ must have an equal number of $1$ and $-1$ elements, and $v_{i, j}$ will always be in the majority. 
The latter case happens with probability $2\binom{m-1}{(m-1)/2}/2^{m-1}$, which grows asymptotically as $\frac{2}{\sqrt{\pi m / 2}} (1\pm o(1))$ (e.g., see the growth of the central binomial coefficient in \cite{luke1969special}, page 35).
Therefore $p = 1/2+\Omega(1/\sqrt{m})$.
From this, by Hoeffding's inequality, we get that ${\mathbb{P}(\langle v_i, w\rangle - \Omega(n/\sqrt{m}) \geq t)} \leq \exp(-\Omega(t^2/n))$, so with high probability ${\langle w, v_i\rangle \geq \Omega(n/\sqrt{m}) - O(\sqrt{n \log m})}$ for all $v_1, \ldots, v_m$.
The first term dominates as long as ${n \gg m \log m}$.
\end{proof}

\end{document}